\newcommand{\removelatexerror}{\let\@latex@error\@gobble}
    \newcommand{\EmBracet}[3][]{%
    \begin{tikzpicture}[overlay,remember picture]%
        \draw [decoration={brace,amplitude=0.5em},decorate,thick, #1] (#3) -- (#2);
    \end{tikzpicture}%
    }%
\newcommand{\weakestpre}[2]{\mathtt{wp}({#1},\linebreak[0] {#2})}
 \newtheorem{definition}{Definition}{\bfseries}{\itshape}
\newtheorem{theorem}{Theorem}
\newtheorem{lemma}{Lemma}
\newtheorem{property}{Property}{\bfseries}{\itshape}
\newenvironment{proof}[1][Proof]{\begin{trivlist}
\item[\hskip \labelsep {\bfseries #1}]}{\end{trivlist}}
\newcommand{\qed}{\nobreak \ifvmode \relax \else
      \ifdim\lastskip<1.5em \hskip-\lastskip
      \hskip1.5em plus0em minus0.5em \fi \nobreak
      \vrule height0.75em width0.5em depth0.25em\fi}
\definecolor{shadecolor}{RGB}{180,180,200}
\tikzset{%
  highlight/.style={rectangle,rounded corners,fill=red!15,draw,fill opacity=0.5,thick,inner sep=0pt}
}
\newcommand{\tikzmark}[2]{\tikz[overlay,remember picture,baseline=(#1.base)] \node (#1) {#2};}
\title{From Traces To Proofs: Proving Concurrent Programs Safe}
\author{\IEEEauthorblockN{Chinmay Narayan\IEEEauthorrefmark{1},
Subodh Sharma\IEEEauthorrefmark{2}, Shibashis Guha\IEEEauthorrefmark{3} and
S.Arun-Kumar\IEEEauthorrefmark{4}}
\IEEEauthorblockA{Department of Computer Science and Engineering,\\
Indian Institute of Technology Delhi\\
Email: \IEEEauthorrefmark{1}chinmay@cse.iitd.ac.in,
\IEEEauthorrefmark{2}svs@cse.iitd.ac.in,
\IEEEauthorrefmark{3}shibashis@cse.iitd.ac.in,
\IEEEauthorrefmark{4}sak@cse.iitd.ac.in}}
\begin{document}

 \newcommand{\Val}{\mathsf{Val}}
\newcommand{\Var}{\mathsf{Var}}
\newcommand{\Exp}{\mathsf{Exp}}
\newcommand{\Powerset}[1]{\mathbb{P}({#1})}
\newcommand{\Elem}[1]{\mathcal{EL}(#1)}
\newcommand{\Assign}[2]{\mathrm{#1}\mathbf{:=}\mathrm{#2}}
\newcommand{\subst}[3]{\mathtt{\linebreak[0]#1[\linebreak[0]{#2}/\linebreak[0]{#3}]}}
\newcommand{\w}[2]{!({{#1}},{#2})}
\renewcommand{\r}[2]{?({{#1}},{#2})}
\newcommand{\inarr}[1]{\begin{array}{@{}l@{}}#1\end{array}}
\newcommand{\lab}[1]{{\scriptsize{#1.}}~~}
\newcommand{\true}{\mathsf{true}}
\newcommand{\false}{\mathsf{false}}
\newcommand{\set}[1]{\{#1\}}
\newcommand{\restr}[2]{#1\downharpoonleft_{#2}}
\newcommand{\Last}{\mathbf{Last}}
\newcommand{\lland}{~\land~}
\newcommand{\llor}{~\lor~}
\newcommand{\llnot}{\lnot~}
\newcommand{\limp}{\Rightarrow}
\newcommand{\aut}[1]{\mathcal{A}(#1)}
\newcommand{\htt}[1]{\widehat{#1}}
\newcommand{\Lab}{\mathtt{Lab}}
\newcommand{\ndelta}{\mathcal{\delta}}
\newcommand{\unsatcore}{\mathtt{Unsatcore}}
\newcommand{\validcore}{\mathtt{validcore}}
\newcommand{\A}[1]{\mathcal{A}(#1)}
\newcommand{\len}[1]{|{#1}|}
\newcommand{\HOARE}[3]{\colorbox{white}{${\{{#1}\}~ \linebreak[0] #2 ~\linebreak[0]\{{#3}\}}$}}
\newcommand{\word}[1]{\mathtt{``#1"}}
\newcommand{\tafap}{B_{(\sigma, \neg \phi)}}
\newcommand{\bld}[1]{\mathbf{#1}}
\newcommand{\lang}[1]{\mathcal{L}(#1)}
\newcommand{\HMap}{\mathtt{HMap}}
\newcommand{\acc}{\mathtt{acc}}
\newcommand{\faS}{S_{\forall}}
\newcommand{\teS}{S_{\exists}}
\newcommand{\AssnMap}{\mathtt{AMap}}
\newcommand{\ResdMap}{\mathtt{RMap}}
\newcommand{\rev}{\mathsf{rev}}
\newcommand{\rf}{\mathrm{rf}}
\newcommand{\run}{\sigma}
\newcommand{\PH}{\mathsf{PH}}
\newcommand{\prog}[6]{\langle #1,\linebreak[0] #2,\linebreak[0] #3,\linebreak[0] #4, \linebreak[0]#5, \linebreak[0]#6\rangle}
\newcommand{\Proc}[1]{P_{#1}}
\newcommand{\OP}{\mathrm{op}}
\newcommand{\lcas}[3]{\mathtt{lock}({#1})}
\newcommand{\assume}[1]{\mathtt{assume}(\mathrm{#1})}
\newcommand{\assrt}[1]{\mathtt{assert}(\mathrm{#1})}
\newcommand{\SV}{\mathtt{SV}}
\newcommand{\LV}{\mathtt{LV}}
\newcommand{\p}{\mathit{p}}
\newcommand{\nop}{\mathtt{nop}}
\newcommand{\partialfun}{\hookrightarrow}
\newcommand{\AssnM}{\mathrm{Assrn}}
\newcommand{\BExp}{\mathtt{BExp}}
\newcommand{\succset}{\mathtt{succ}}
\newcommand{\track}[1]{\colorbox{yellow}{\textcolor{red}{#1}}}
\newcommand{\defeq}{\stackrel{def}{=}}
\newcommand{\afa}[1]{\mathcal{\hat{A}}_{#1}}
\newcommand{\lb}[1]{\mathtt{#1}}

 \maketitle
\thispagestyle{plain}
\begin{abstract}
  Nondeterminism in scheduling is the cardinal
  reason for difficulty in proving correctness of
  concurrent programs.  A powerful proof strategy
  was recently
  proposed~\cite{Farzan:2013:IDF:2429069.2429086}
  to show the correctness of such programs. The
  approach captured data-flow dependencies among
  the instructions of an interleaved and
  error-free execution of threads. These data-flow
  dependencies were represented by an {\em
    inductive data-flow graph} (iDFG), which, in a
  nutshell, denotes a set of executions of the
  concurrent program that gave rise to the
  discovered data-flow dependencies.  The iDFGs
  were further transformed in to alternative
  finite automatons (AFAs) in order to utilize
  efficient automata-theoretic tools to solve the
  problem.  In this paper, we give a novel and
  efficient algorithm to directly construct AFAs
  that capture the data-flow dependencies in a
  concurrent program execution.  We implemented
  the algorithm in a tool called \texttt
  {ProofTraPar} to prove the correctness of finite
  state cyclic programs under the sequentially
  consistent memory model. Our results are
  encouranging and compare favorably to existing
  state-of-the-art tools.
\end{abstract}

\tikzset{
      finalforallnode/.style={
      rectangle,
      double=white,
      double distance=2pt,
      minimum size=6mm,rounded corners=2mm,
      very thick,
      draw=red!50!black!50, 
      top color=white, 
      bottom color=red!50!black!20, 
      font=\itshape
      },
     forallnode/.style={
      rectangle,
      minimum size=6mm,rounded corners=2mm,
      very thick,
      draw=red!50!black!50, 
      top color=white, 
      bottom color=red!50!black!20, 
      font=\itshape
      },
finalexistsnode/.style={
      rectangle,
      minimum size=6mm,
      double=white,
      double distance=2pt,
      very thick,
      draw=red!50!green!50, 
      top color=white, 
      bottom color=red!50!yellow!20, 
      font=\itshape
      },
existsnode/.style={
      rectangle,
      minimum size=6mm,
      very thick,
      draw=red!50!green!50, 
      top color=white, 
      bottom color=red!50!yellow!20, 
      font=\itshape
      },
pt/.style={circle,fill=#1,inner sep=0mm,minimum size=11pt,font=\scriptsize},
}
\renewcommand{\bfdefault}{bx}
\newcommand{\fallstate}[1]{
\node[pt=black] at ({#1}.10) () {{\contourlength{0.4pt}\contournumber{1}\contour{white}{$\forall$}}};
}
\newcommand{\fexstate}[1]{
\node[pt=black] at ({#1}.10) () {{\contourlength{0.4pt}\contournumber{1}\contour{white}{$\exists$}}};
}
\newcommand{\setinitnode}[1]{\node[pt=white, left = 10pt of {#1}] (dummy) {};\draw[->] (dummy) to node [above] {} (#1);}

\newcommand{\idfg}[1]{
\scalebox{#1}{
\begin{tikzpicture}
\node[forallnode, label=left:$n_2$] (n2) {$a:y:=w$};
\node[forallnode, right= of n2, label=left:$n_3$] (n3) {$b:r:=w+1$};
\node[finalforallnode, below right= of n2, label=left:$n_1$] (n1) {$c:t:=x-1$};

\node[below=1 cm of n1] (dummy1) {};
\node[above= 1 cm of n2] (dummy2) {};
\node[above= 1 cm of n3] (dummy3) {};

\draw [->] (n1) to node [left] {$\begin{array}{l}\\ (y=w)\\ \lland (t<x) \\ \lland (r>w)\end{array}$} (dummy1) ;
\draw [->] (n2) to node [left] {$y=w$} (n1) ;
\draw [->] (n3) to node [left] {$r>w$} (n1) ;
\draw [->] (dummy2) to node [left] {$\true$} (n2) ;
\draw [->] (dummy3) to node [left] {$\true$} (n3) ;
\end{tikzpicture}
}
}

\newcommand{\idfgexafa}[1]{
\begin{tikzpicture}

\node[finalexistsnode, label=left:$s_4$,label=above:\colorbox{#1}{$\true$}] (n1) {$\true$};
\fexstate{n1};
\node[finalexistsnode, right=2cm of n1, label=above:\colorbox{#1}{$\true$},label=left:$s_5$] (n2) {$\true$};
\fexstate{n2};
\node[finalexistsnode, right=2cm of n2,label=above:\colorbox{#1}{$\true$},label=left:$s_6$] (n3) {$\true$};
\fexstate{n3};

\node[existsnode, below=1cm of n1, label=right:\colorbox{#1}{$\true$},label=left:$s_1$] (n4) {$y=w$};
\fexstate{n4};
\node[existsnode, below=1cm of n2, label=right:\colorbox{#1}{$\true$},label=left:$s_2$] (n5) {$r>w$};
\fexstate{n5};
\node[existsnode, below=1cm of n3,label=right:\colorbox{#1}{$\true$},label=left:$s_3$] (n6) {$t<x$};
\fexstate{n6};

\node[forallnode, below right=3cm of n4,label=right:\colorbox{#1}{$\true$},label=below:$s_0$] (n7) {$(y=w)\lland (r>w)\lland (t<x)$};
\fallstate{n7};
\setinitnode{n7};

\draw [->, loop left=1pt] (n1) to node [left] {$a,b,c$} (n1) ;
\draw [->,loop left=1pt] (n2) to node [left] {$a,b,c$} (n2) ;
\draw [->,loop left=1pt] (n3) to node [left] {$a,b,c$} (n3) ;

\draw [->] (n4) to node [left] {$a$} (n1) ;
\draw [->, loop below=1pt] (n4) to node [left] {$b,c$} (n4) ;
\draw [->] (n5) to node [left] {$b$} (n2) ;
\draw [->, loop below=1pt] (n5) to node [left] {$a,c$} (n5) ;
\draw [->] (n6) to node [left] {$c$} (n3) ;
\draw [->, loop below=1pt] (n6) to node [left] {$a,b$} (n6) ;
\draw [->] (n7) to node [left] {$\nop$} (n4) ;
\draw [->] (n7) to node [left] {$\nop$} (n5) ;
\draw [->] (n7) to node [left] {$\nop$} (n6) ;
\end{tikzpicture}
}

\newcommand{\dualafaex}[1]{
\scalebox{#1}{
  $\begin{array}{l}
  \lab{a} \w{Y}{w}\\
  \lab{b} \w{R}{w+1}\\
  \lab{c} \w{T}{x-1}\\
  \end{array}$
}
  }

\newcommand{\dualafa}[1]{
\begin{tikzpicture}

\node[finalexistsnode, label=right:$s_5$,label=above:\colorbox{#1}{$\true$}] (n0) {$\true$};
\fexstate{n0};
\node[finalexistsnode, below=of n0, label=above:\colorbox{#1}{$\true$},label=right:$s_6$] (n1) {$\true$};
\fexstate{n1};
\node[finalexistsnode, below=of n1, label=above:\colorbox{#1}{$\true$},label=right:$s_7$] (n2) {$\true$};
\fexstate{n2};

\node[existsnode, left=of n0,label=left:\colorbox{#1}{$\true$},label=below right:$s_2$] (n3) {$Y=w$};
\fexstate{n3};
\node[existsnode, left= of n1, label=above:\colorbox{#1}{$\true$},label=below right:$s_3$] (n4) {$R>w$};
\fexstate{n4};

\node[forallnode, left=of n4, label=above:\colorbox{#1}{$\true$},label=below right:$s_1$] (n5) {$\begin{array}{l}(Y=w)\\ \llor (R>w)\end{array}$};
\fallstate{n5};
\node[existsnode, left=of n2,label=left:\colorbox{#1}{$\true$},label=below right:$s_4$] (n6) {$T<x$};
\fexstate{n6};
\node[forallnode, left =of n5,label=above:\colorbox{#1}{$\true$},label=below:$s_0$] (n7) {$\begin{array}{l}((Y=w)\\ \llor (R>w))\\ \lland (T<x)\end{array}$};
\fallstate{n7};
\setinitnode{n7};

\draw [->, loop below=1pt] (n0) to node [above] {$a,b,c$} (n0) ;
\draw [->,loop below=1pt] (n1) to node [above] {$a,b,c$} (n1) ;
\draw [->,loop below=1pt] (n2) to node [above] {$a,b,c$} (n2) ;

\draw [->,loop above=1pt] (n3) to node [above] {$b,c$} (n3) ;
\draw [->,loop below=1pt] (n6) to node [above] {$a,b$} (n6) ;
\draw [->,loop below=1pt] (n4) to node [above] {$a,c$} (n4) ;

\draw [->] (n3) to node [above] {$a$} (n0) ;
\draw [->] (n4) to node [above] {$b$} (n1) ;
\draw [->] (n6) to node [above] {$c$} (n2) ;
\draw [->] (n5) to node [above] {$\nop$} (n3) ;
\draw [->] (n5) to node [above] {$\nop$} (n4) ;

\draw [->] (n7) to node [above] {$\nop$} (n5) ;
\draw [->] (n7) to node [above] {$\nop$} (n6) ;

\end{tikzpicture}
}

\newcommand{\petersonpro}{
\begin{tikzpicture}

\node[existsnode] (n0) {$q_a$};
\node[existsnode, below=0.5 cm of n0] (n1) {$q_b$};
\node[existsnode, below=0.5 cm of n1] (n2) {$q_c$};
\node[existsnode, below=0.5 cm of n2] (n3) {$q_d$};
\node[existsnode, below=0.5 cm of n3] (n4) {$q_e$};
\node[finalexistsnode, below=0.5 cm of n4] (n5) {$q_f$};

\node[existsnode, right=1cm of n0] (n00) {$q_p$};
\node[existsnode, below=0.5 cm of n00] (n10) {$q_q$};
\node[existsnode, below=0.5 cm of n10] (n20) {$q_r$};
\node[existsnode, below=0.5 cm of n20] (n30) {$q_s$};
\node[existsnode, below=0.5 cm of n30] (n40) {$q_t$};
\node[finalexistsnode, below=0.5 cm of n40] (n50) {$q_u$};

%
\setinitnode{n0};
\setinitnode{n00};

\draw [->, above=2pt] (n00) to node [right] {$\lb{p}$} (n10) ;
\draw [->] (n10) to node [right] {$\lb{q}$} (n20) ;
\draw [->] (n20) to node [right] {$\lb{P}$} (n30) ;
\draw [->,loop right=1pt] (n20) to node [above] {$\lb{Q}$} (n20) ;
\draw [->] (n30) to node [right] {$\lb{r}$} (n40) ;
\draw [->] (n40) to node [right] {$\lb{s}$} (n50) ;
\draw [->, bend right=40] (n50) to node [right,pos=0.8] {$\lb{t}$} (n00) ;
\draw [->, above=2pt] (n0) to node [right] {$\lb{a}$} (n1) ;
\draw [->] (n1) to node [right] {$\lb{b}$} (n2) ;
\draw [->] (n2) to node [right] {$\lb{A}$} (n3) ;
\draw [->,loop right=1pt] (n2) to node [right] {$\lb{B}$} (n2) ;
\draw [->] (n3) to node [right] {$\lb{c}$} (n4) ;
\draw [->] (n4) to node [right] {$\lb{d}$} (n5) ;
\draw [->, bend left=30] (n5) to node [left,pos=0.5] {$\lb{e}$} (n0) ;
\end{tikzpicture}
}

\input{examples}
\section{Introduction}\label{sec:intro}
The problem of checking whether or not a correctness
property (specification) is violated by the program
(implementation) is already known to be challenging in a
sequential set-up, let alone when programs are implemented
exploiting concurrency.
The central reason for greater complexity in verification of
concurrent implementations is due to the exponential
increase in the number of executions.  A concurrent program
with $n$ threads and $k$ instructions per thread can have
$(nk)!/(k!)^n$ executions under a sequentially consistent
(SC)\cite{Lamport:1979:MMC:1311099.1311750} memory model.  A
common approach to address the complexity due to the
exponential number of executions is {\em trace
  partitioning}.

In \cite{Farzan:2013:IDF:2429069.2429086}, a powerful proof
strategy was presented which utilized the notion of trace
partitioning.
\renewcommand{\bld}[1]{\mathtt{#1}} Let us take Peterson's
algorithm in Figure \ref{fig:petersons} to convey the
central idea behind the trace partitioning approach.  In this
algorithm, two processes, $P_i$ and $P_j$, coordinate to
achieve an exclusive access to a critical section (CS) using
shared variables.
A process $P_i$ will busy-wait if $P_j$ has expressed
interest to enter its CS and $\bld{t}$ is $j$.
%

In order to prove the mutual exclusion (ME) property of
Peterson's algorithm, we must consider the boolean
conditions of the while loops at control locations 3 and 8.
the ME property is established only when at most one of these
conditions is false under every execution of the program,
{\em i.e.\xspace}, ME must be shown to hold true on
unbounded number of traces (trace is ``a sequence of events
corresponding to an interleaved execution of processes in
the program''\cite{Gupta:2015:SRC:2775051.2677008})
generated due to unbounded number of unfoldings of the
loops. Notice that events at control locations 3 and 8 are
{\em data-dependent} on events from control locations
$2,6,7$ and $1,2,7$, respectively.  In any finite prefix of
a trace of $P_i\| P_j$ (interleaved execution of $ P_i$ and
$P_j$) up to the events corresponding to control location 3
or 8, the last instance of event at control location 2,
$\bld{lst2}$, and the last instance of event at control
location 7, $\bld{lst7}$, can be ordered in only one of the
following two ways; either $\bld{lst2}$ appears before
$\bld{lst7}$ or $\bld{lst2}$ appears after $\bld{lst7}$.
This has resulted in partitioning of an unbounded set of
traces to a set with mere two traces.
\begin{figure}[!t]
\centering{\footnotesize $\mathrm{flag_i}=false,\mathrm{flag_j}=false,\mathrm{t}=0$; \normalsize}\\
 $\begin{array}{l@{~~}|@{~~}l}
\footnotesize{
\inarr{
{}\\ ~~~~~~ P_{i}~~~~~\\
While(true)\{\\
\lab{1} \Assign{flag_i}{true};\\
\lab{2} \Assign{t}{j};\\
\lab{3} \mathrm{while}(\mathrm{flag}_j=\mathrm{true} \,\&\, \mathrm{t}=j);\\
\lab{4}  \bld{ //Critical~Section}\\
\lab{5} \Assign{flag_i}{false};\\
\}
}
}&
\footnotesize{
\inarr{
{}\\ ~~~~~~ P_{j}~~~~~\\
While(true)\{\\
\lab{6} \Assign{flag_j}{true};\\
\lab{7} \Assign{t}{i};\\
\lab{8} \mathrm{while}(\mathrm{flag_i}=\mathrm{true}  \,\&\, \mathrm{t}= i);\\
\lab{9} \bld{//Critical~Section}\\
\lab{10} \Assign{flag_j}{false}; \\
\}
}
}
\end{array}$
\caption {Peterson's algorithm for two processes $\Proc{i}$ and $\Proc{j}$}
\label{fig:petersons}
\vspace{-0.5em}
\end{figure}

When $\bld{lst2}$ appears before $\bld{lst7}$, then the
final value of the variable $\bld{t}$ is $i$, thus making
the condition at control location 8 to be $\true$.  In the
other case, when $\bld{lst2}$ appears after $\bld{lst7}$,
the final value of the variable $\bld{t}$ is $j$, thereby
making the condition at control location 3 evaluate to
$\true$.  Hence, in no trace both the conditions are false
simultaneously. This informal reasoning indicates that both
processes can never simultaneously enter in their critical
sections.
Thus, proof of correctness for Peterson's algorithm can be
demonstrated by picking two traces, as mentioned above, from
the set of infinite traces
and proving them correct.  In general, the intuition is that
a proof for a single trace of a program can result in
pruning of a large set of traces from consideration. To
convert this intuition to a feasible verification method,
there is a need to construct a formal structure from a proof
of a trace $\sigma$ such that the semantics of this
structure includes a set of all those traces that have proof
arguments equivalent to proof of $\sigma$.
%
\emph{Inductive Data Flow Graphs} (iDFG) was proposed in
\cite{Farzan:2013:IDF:2429069.2429086} to capture
data-dependencies among the events of a trace and to perform
trace partitioning. All traces that have the same iDFG must
have the same proof of correctness. In every iteration of
their approach, a trace is picked from the set of \emph{all}
traces that is yet to be covered by the \emph{iDFG}. An iDFG
is constructed from its proof. The process is repeated until
all the traces are either covered in the \emph{iDFG} or a
counter-example is found.  An intervening step is involved
where the iDFG is converted to an {\em alternating finite
  automaton} (AFA).  While we explain AFA in later sections,
it suffices to understand at this stage that the language
accepted by this AFA and the set of traces captured by the
corresponding iDFG is the same. Their reason for this
conversion is to leverage the use of automata-theoretic
operations such as subtraction, complement {\em
  etc\xspace.}, on the set of traces.
Though the goal of paper
\cite{Farzan:2013:IDF:2429069.2429086} is verification of
concurrent programs which is the same as in this work, our
work has some crucial differences: (i) An AFA is constructed
directly from the proof of a trace without requiring the
iDFG construction, (ii) the verification procedure built on
directly constructed AFA is shown to be sound and complete
(weakest-preconditions are used to obtain the proof of
correctness of a trace), (iii) to the best of our knowledge,
we provide the first implementation of the proof strategy
discussed in \cite{Farzan:2013:IDF:2429069.2429086}.
\begin{figure}
 \removelatexerror
  \begin{minipage}[b]{\linewidth}
	 \begin{minipage}[b]{0.2\linewidth}
	    \scalebox{1}{\parbox{\linewidth}{%
	    \centering{
	    \comparisontrace}
	  }}
	  \captionof{subfigure}{}
	  \end{minipage}
\begin{minipage}[b]{0.4\linewidth}
	\begin{minipage}{\linewidth}
	    \scalebox{0.8}{\parbox{\linewidth}{%
	    \centering{
	      $~~~~~~~~~\set{abc,bac}$}}}
	    \captionof{subfigure}{}
	    \label{dd}
	  \end{minipage}
	\begin{minipage}{\linewidth}
	    \scalebox{0.8}{\parbox{\linewidth}{%
	    \centering{
	      $~~~~~\set{abc,bac,acb,cab,bca,cba}$}}}
	    \captionof{subfigure}{}
	    \label{dd}
	  \end{minipage}
\end{minipage}
\hspace{-0.5cm}\begin{minipage}[b]{0.4\linewidth}
	    \scalebox{0.6}{\parbox{\linewidth}{%
	    \centering{
	       	    \comparisonidfg}}}
	      \captionof{subfigure}{}
	      \label{dde}
	  \end{minipage}
      \centering{
      \caption{Comparison with \cite{Farzan:2013:IDF:2429069.2429086}}
      \label{fig:examplecomparison}}
      \vspace{-0.6cm}
 \end{minipage}
\end{figure}
The example trace of Figure \ref{fig:examplecomparison}(a)
highlights the key difference between iDFG to AFA conversion
of \cite{Farzan:2013:IDF:2429069.2429086} and the direct
approach presented in this work. Note that all three events
$a,b$, and $c$ are data independent, hence every resulting
trace after permuting the events in $abc$ also satisfies the
same set of pre- and post-conditions. For a Hoare triple
$\HOARE{w>3}{abc}{y>3 \lland t<x \lland r>w}$, Figure
\ref{fig:examplecomparison}(b) shows the set of traces
admitted by an AFA (obtained from iDFG shown in
Figure~\ref{fig:examplecomparison}(d)) after the first
iteration, as computed by
\cite{Farzan:2013:IDF:2429069.2429086}.  This set clearly
does not represent every permutation of $abc$; consequently,
more iterations are required to converge to an AFA that
represents all traces admissible under the same set of pre-
and post-conditions. In contrast, the AFA that is
constructed directly by our approach from the Hoare triple
$\HOARE{w>3}{abc}{y>3 \lland t<x \lland r>w}$, admits the
set of traces shown in Figure
\ref{fig:examplecomparison}(c).  Hence, on this example, our
strategy terminates in a single iteration.

To summarize, the contributions of this work are as follows:
\begin {itemize}
\item we present a novel algorithm to directly construct an
  AFA from a proof of a sequential trace of a finite state
  (possibly cyclic) concurrent program.  This construction
  is used to give a sound and complete verification
  procedure along the lines of
  \cite{Farzan:2013:IDF:2429069.2429086}. 
%
\item While \cite{Farzan:2013:IDF:2429069.2429086} allowed
  the use of any sequential verification method to construct
  a proof of a given trace, the paper does not comment on
  the performance and the feasibility of their approach due
  to the lack of an implementation. The second contribution
  of this paper is an implementation in the form of a tool,
  \texttt{ProofTraPar}. We compare our implementation
  against other state-of-the-art tools in this domain, such
  as THREADER \cite{Gupta:2011:TCV:2032305.2032337} and
  Lazy-CSeq \cite{Lazy-CSeq} (winners in the concurrency
  category of the software verification competitions held in
  2013, 2014, and 2015). \texttt{ProofTraPar}, on 
  average, performed an order of magnitude better than
  THREADER and 3 times better than Lazy-CSeq.
\end{itemize}
The paper is organized as follows: {Section
  {\ref{sec:prelim}}} covers the notations, definitions and
programming model used in this paper; Section
{\ref{sec:example}} presents our approach with the help of
an example to convey the overall idea and describes in
detail the algorithms for constructing the proposed
alternating finite automaton along with their correctness
proofs. This section ends with the overall verification
algorithm with the proof of its soundness and completeness
for finite state concurrent programs.  Section
{\ref{sec:experiments}} presents the experimental results
and comparison with existing tools namely THREADER
\cite{Gupta:2011:TCV:2032305.2032337} and Lazy-CSeq
\cite{Lazy-CSeq}.  Section {\ref{sec:related}} presents the
related work and Section {\ref{sec:conclude}} concludes with
possible future directions.

\section{Preliminaries} \label{sec:prelim}
\subsection{Program Model}\label{subsec:pro} We consider
shared-memory concurrent programs composed of a fixed number
of deterministic sequential processes and a finite set of
shared variables $\SV$. A \emph{concurrent program} is a
quadruple $\mathcal{P}=(P,A,\mathcal{I},\mathcal{D})$ where
$P$ is a finite set of processes, $A=\set{A_p\mid p\in P}$
is a set of automata, one for each process specifying their
behaviour, $\mathcal{D}$ is a finite set of constants
appearing in the syntax of processes and $\mathcal{I}$ is a
function from variables to their initial values. Each
process $p\in P$ has a disjoint set of local variables
$\LV_p$. Let $\Exp_p$ ($\BExp_p$) denote the set of
expressions (boolean expressions), ranged over by
$\mathrm{exp}$ ($\phi$) and constructed using shared
variables, local variables, $\mathcal{D}$, and standard
mathematical operators. Each specification automaton $A_p$
is a quadruple $\langle Q_p, q_p^{init}, \delta_p,
\AssnM_p\rangle$ where $Q_p$ is a finite set of control
states, $q_p^{init}$ is the initial state, and
$\AssnM_p\subseteq Q_p \times \BExp_p$ is a relation
specifying the assertions that must hold at some control
state. Each transition in $\delta_p$ is of the form
$(q,op_p,q')$ where $op_p\in
\set{\Assign{x}{exp},\assume{\phi},\lcas{x}{v_1}{v_2}}$. Here
$\Assign{x}{exp}$ evaluates $\mathrm{exp}$ in the current
state and assigns the value to $\mathrm{x}$ where
$\mathrm{x}\in \SV \cup \LV_p$. $\assume{\phi}$ is a
blocking operation that suspends the execution if the
boolean expression $\phi$ evaluates to $\false$ otherwise it
acts as $\nop$. This instruction is used to encode control
path conditions of a program. $\lcas{x}{v_1}{v_2}$, where
$x\in\SV$, is a blocking operation that suspends the
execution if the value of $x$ is not equal to $0$ otherwise
it assigns $1$ to $x$. Operation unlock is achieved by
assigning $0$ to this shared variable. Each of these
operations are deterministic in nature, i.e. execution of
any two same operations from the same states always give the
same behaviour. In all examples of this paper, we use
symbolic labels to succinctly represent program
operations. For example, Figure \ref{fig:petersonpro} shows
the specification of two processes in Peterson's
algorithm. Labels $\{\lb{a,b,p,q}\cdots\}$ denote operations
in the program. Variable $\mathrm{res}$ is introduced to
specify the mutual exclusion property as a safety
property. A process $P_i$ sets this variable to $i$ inside
its critical section. Assertions $\assrt{res=i}$ is checked
in $P_i$ before leaving its critical section. If these
assertions hold in every execution of these two processes
then the mutual exclusion property holds. These assertions
are shown as $\AssnM_{P_1}(q_f)$ and $\AssnM_{P_2}(q_u)$ in
Figure \ref{fig:petersonpro} and they need to be checked at
state $q_f$ and $q_u$ respectively.
\begin{figure}
\removelatexerror
 \begin{minipage}[b]{\linewidth}
      \begin{minipage}[b]{0.3\linewidth}
	  \scalebox{0.7}{\parbox{\linewidth}{%
	    \petersonpro}} 
      \end{minipage}
      \begin{minipage}[b]{0.2\linewidth}
      \hspace{0.4cm}\scalebox{0.7}{\parbox{\linewidth}{%
	$\begin{array}{l}
	\lb{a}.\Assign{flag_1}{\true}\\
	\lb{b.~} \Assign{turn}{2}\\
	\lb{A.~} \assume{\neg \phi}\\
	\lb{B.~} \assume{\phi}\\
	\lb{c.~} \Assign{res}{1}\\
	\lb{d.~} \Assign{\ell_1}{res}\\
	\lb{e.~} \Assign{flag_1}{\false}
	\end{array} $
	}}
      \end{minipage}
      \begin{minipage}[b]{0.4\linewidth}
      \hspace{0.4cm}\scalebox{0.7}{\parbox{\linewidth}{%
	$\begin{array}{l}
	\lb{p.~} \Assign{flag_2}{\true}\\
	\lb{q.~} \Assign{turn}{1}\\
	\lb{P.~} \assume{\neg \phi'}\\
	\lb{Q.~} \assume{\phi'}\\
	\lb{r.~} \Assign{res}{2}\\
	\lb{s.~} \Assign{\ell_2}{res}\\
	\lb{t.} \Assign{flag_2}{\false}
	\end{array}$}}
      \end{minipage}
      \scalebox{0.7}{\parbox{\linewidth}{%
      \begin{center}$\begin{array}{l}\AssnM_{P_1}(q_f)\defeq\mathrm{(\ell_1=1)},~\AssnM_{P_2}(q_u)\defeq\mathrm{(\ell_2=2)}\\
                 \phi\defeq \mathrm{flag_2=\true~ \&\&~ turn=2}, \phi'\defeq \mathrm{flag_1=\true~ \&\&~ turn=1}
	  \end{array}$\end{center}
      }}
 \caption{Specification of Peterson's algorithm}
 \vspace{-0.5cm}
  \label{fig:petersonpro}
 \end{minipage}
\end{figure}
A tuple, say $t$, of $n$ elements can be represented as a function such that $t[k]$ returns the $k^{th}$ element of this tuple. Given a function $fun$, $fun[a\leftarrow b]$ denotes another function same as of $fun$ except at $a$ where it returns $b$. 

\paragraph{Parallel Composition in the SC memory model} Given a concurrent program $\mathcal{P}=(P,A,\mathcal{I},\mathcal{D})$ consisting of $n$ processes $P=\set{p_1,\cdots,p_n}$ we define an automaton $\aut{\mathcal{P}}=(\overline{Q},\overline{q}^{init},\overline{\delta},\overline{\AssnM})$ to represent the parallel composition of $\mathcal{P}$ in the SC memory model. Here $\overline{Q}=Q_{p_1}\times \cdots \times Q_{p_n}$ is the set of states ranged over by $\overline{q}$,
$\overline{q}^{init}=(q_{p_1}^{init},\cdots,q_{p_n}^{init})$ is the initial state, and
transition relation $\overline{\delta}$ models the interleaving semantics. Formally, $(\overline{q},op_j,\overline{q}')\in \overline{\delta}$ iff there exists a $j\in \set{1\cdots n}$ such that $\overline{q}[j]=q_{p_j}$, $\overline{q}'=\overline{q}[j\leftarrow q'_{p_j}]$ and $(q_{p_j},op_j,q'_{p_j})\in \delta_{p_j}$.
For a state $\overline{q}$, let $T(\overline{q})=\set{\AssnM_{p_i}(\overline{q}[i])\mid i\in \set{1\cdots n}}$. If $T(\overline{q})$ is not empty then $\overline{\AssnM}(\overline{q})$ is the conjunction of assertions in the set $T(\overline{q})$. Relation $\overline{\AssnM}$ captures the assertions which need to be checked in the interleaved traces of $P$. As our interest lies in analyzing those traces which reach those control points where assertions are specified, we mark all those states where the relation $\overline{\AssnM}$ is defined as accepting states. Every word accepted by $\aut{\mathcal{P}}$ represents one SC execution leading to a control location where at least one assertion is to be checked.
%
\subsection{Weakest Precondition} 
Given an operation $op\in \mathcal{OP}(P)$ and a postcondition formula $\phi$, the weakest precondition \emph{of $op$ with respect to $\phi$}, denoted by $\weakestpre{op}{\phi}$, is the weakest formula $\psi$ such that,
 starting from any program state $s$ that satisfies $\psi$, the execution of the operation $op$ terminates and the resulting program state $s'$ satisfies $\phi$.
 \begin{figure}
 \removelatexerror
 \begin{minipage}[b]{\linewidth}
 \scalebox{0.65}{\parbox{\linewidth}{%
    \[
      \weakestpre{op}{\phi}= 
  \begin{cases}
       \phi & \text{if }op\defeq \mathrm{skip} \\
      \subst{\phi}{x}{exp} &  \text{if }op\defeq \Assign{x}{exp}\\
      \phi \lland \phi' & \text{if }op\defeq \assrt{\phi'}\\
      \weakestpre{\assume{x=0}}{\weakestpre{\Assign{x}{1}}{\phi}} & \text{if } op\defeq\lcas{x}{v_1}{v_2}\\
      \weakestpre{op_1}{\weakestpre{op_2}{\phi}} & \text{if } op\defeq op_1.op_2\\
      \phi'\limp \phi& \text{if } op\defeq\assume{\phi'}
  \end{cases}
  \]
  }}
  \caption{Weakest precondition axioms}
  \label{fig:wpstmts}
   \vspace{-0.5cm}
 \end{minipage}
 \end{figure}

%
%
Given a formula $\phi$, variable $X$ and expression $e$, let $\subst{\phi}{X}{e}$ denote 
the formula obtained after substituting all free occurrences of $X$ by $e$ in $\phi$. 
We assume an equality operator over formulae that represents syntactic equality. Every formula is assumed to be normalized in a conjunctive normal form (CNF). We use $\true$ ($\false$) to syntactically represent a logically valid (unsatisfiable) formula.
Weakest precondition axioms for different program statements are shown in Figure \ref{fig:wpstmts}. Here empty sequence of statements is denote by $\mathrm{skip}$.
We have the following properties about weakest preconditions.
\begin{property}\label{prop:wp}
If $\weakestpre{op}{\phi_1}=\psi_1$ and $\weakestpre{op}{\phi_2}=\psi_2$ then,
\begin{itemize}
\item $\weakestpre{op}{\phi_1\lland \phi_2}=\psi_1\lland \psi_2$, and
\item $\weakestpre{op}{\phi_1\llor \phi_2}=\psi_1\llor \psi_2$. Note that this property holds only when $S$ is a deterministic operation which is true in our programming model.
\end{itemize}
\end{property}
\begin{property}\label{prop:wpsubset}
 Let $\phi_1$ and $\phi_2$ be the formulas such that $\phi_1$ logically implies $\phi_2$ then for every operation $op$, the formula $\weakestpre{op}{\phi_1}$ logically implies $\weakestpre{op}{\phi_2}$.
\end{property}

 We say that a formula $\phi$ is \emph{stable} with respect to a statement $S$ if $\weakestpre{S}{\phi}$ is logically equivalent to $\phi$.
 In this paper, we use weakest preconditions to check the correctness of a trace with respect to some safety assertion. A trace $\sigma$ reaching up to a safety assertion $\phi$ is safe if the execution of $\sigma$ starting from the initial state $\mathcal{I}$ either 1) blocks (does not terminate) because of not satisfying some path conditions, or 2) terminates and the resulting state satisfies $\phi$. The following lemmas clearly define the conditions, using weakest precondition axioms, for declaring a trace $\sigma$ either safe or unsafe.
Detailed proofs of these are given in Appendix \ref{prf:lemassumeassrtrhs2} and in \ref{prf:lemassumeassrtlhs2}. 
Here $\mathtt{\subst{\run}{assume}{assert}}$ denote the trace obtained by replacing every instruction of the form $\assume{\phi}$ by $\assrt{\phi}$ in $\run$.
 \begin{lemma}\label{lem:assumeassrtrhs2} 
 For a trace $\run$, an initial program state $\mathcal{I}$ and a safety property $\phi$, if $\weakestpre{\subst{\run}{assume}{assert}}{\neg \phi} \lland \mathcal{I}$ is unsatisfiable then the execution of $\run$, starting from $\mathcal{I}$, either does not terminate or terminates in a state satisfying $\phi$.
\end{lemma}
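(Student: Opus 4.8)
The plan is to prove the contrapositive in spirit: I want to show that if $\run$ terminates when started from $\mathcal{I}$ and the resulting state violates $\phi$, then $\weakestpre{\subst{\run}{assume}{assert}}{\neg\phi} \lland \mathcal{I}$ is satisfiable — in fact, $\mathcal{I}$ itself is a witness. So the core claim I would establish is the following invariant along the execution: if $\run = op_1.op_2.\cdots.op_m$ and the execution from $\mathcal{I}$ does not block and ends in a state $s_m$ with $s_m \not\models \phi$, then $\mathcal{I} \models \weakestpre{op_1.\cdots.op_m}{\neg\phi}$. Combined with the fact that $\weakestpre{}$ distributes over sequential composition (the $op \defeq op_1.op_2$ case in Figure \ref{fig:wpstmts}), this gives satisfiability of the conjunction.

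**Key steps.**

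First I would set up notation for intermediate states: let $s_0 = \mathcal{I}$ and $s_k$ be the state after executing $op_1.\cdots.op_k$, which is well-defined because the execution does not block (no $\assume$ fails) and each operation is deterministic. The plan is then a downward induction on $k$ from $m$ to $0$, proving $s_k \models \weakestpre{op_{k+1}.\cdots.op_m}{\neg\phi}$. The base case $k = m$ is exactly the hypothesis $s_m \models \neg\phi$. For the inductive step, I would use the defining property of the weakest precondition together with a case analysis on the operation $op_{k+1}$: for an assignment $\Assign{x}{exp}$ the standard substitution lemma ($s_k \models \subst{\psi}{x}{exp}$ iff $s_{k+1} \models \psi$) does the job; for $\assume{\phi'}$, since the execution did not block we have $s_k \models \phi'$, and replacing it by $\assrt{\phi'}$ in the transformed trace turns the weakest precondition into $\phi' \lland (\text{rest})$, which $s_k$ satisfies precisely because $s_k \models \phi'$ and, by the induction hypothesis, $s_k$ satisfies the rest (note $\assume{\phi'}$ and $\assrt{\phi'}$ act identically on a state satisfying $\phi'$, so $s_{k+1}$ is the same); the $\lcas$ case unfolds to the $\assume$-then-assign composite already covered. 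Taking $k = 0$ yields $\mathcal{I} \models \weakestpre{\subst{\run}{assume}{assert}}{\neg\phi}$, so the conjunction with $\mathcal{I}$ is satisfiable, contradicting the hypothesis. Hence no such terminating-and-violating execution exists, which is the statement of the lemma.

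**Main obstacle.**

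The routine parts are the substitution lemma for assignments and the composition rule, which are standard. The step that needs the most care is the treatment of $\assume$ versus $\assrt$: I must argue that replacing every $\assume{\phi'}$ by $\assrt{\phi'}$ does not change the actual state trajectory of a \emph{non-blocking} run (because at each such point the guard holds, so both operations behave as $\nop$ on that state), while it \emph{does} change the weakest precondition from $\phi' \limp \psi$ to $\phi' \lland \psi$ — and crucially the stronger postcondition-side formula is still satisfied by $s_k$ exactly under the non-blocking hypothesis. Getting this interplay stated cleanly, and making sure the induction is carried on the \emph{transformed} trace while the state trajectory is read off the \emph{original} trace, is the delicate bookkeeping point; everything else is a direct structural induction over the $\weakestpre{}$ axioms in Figure \ref{fig:wpstmts}.
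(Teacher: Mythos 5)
Your proof is correct, but it is organized differently from the paper's. You argue the contrapositive: assuming the run from $\mathcal{I}$ does not block and ends in a state violating $\phi$, you sweep backwards along the concrete execution, maintaining the invariant $s_k \models \weakestpre{\subst{op_{k+1}\cdots op_m}{assume}{assert}}{\neg\phi}$, and conclude that $\mathcal{I}$ itself witnesses satisfiability of $\weakestpre{\subst{\run}{assume}{assert}}{\neg\phi} \lland \mathcal{I}$. The paper instead proves the statement directly by induction on $\len{\run}$, peeling off the \emph{last} operation $a$ and applying the induction hypothesis to the prefix with an adjusted postcondition (e.g.\ $\subst{\phi}{E}{x}$ for an assignment, $\neg\phi' \llor \phi$ for $\assume{\phi'}$), with an explicit case split at each $\assume$ between the run blocking (so the whole execution does not terminate) and the guard holding (so $\assume$ acts as $\nop$ and the postcondition is met). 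The two arguments rest on the same ingredients --- the composition rule for $\weakestpre{\cdot}{\cdot}$, the substitution lemma for assignments, the observation that $\assume{\phi'}$ and $\assrt{\phi'}$ agree on states satisfying $\phi'$, and the unfolding of $\mathtt{lock}$ into assume-then-assign --- but they buy slightly different things: the paper's direct induction carries the ``does not terminate'' disjunct through every step, making the blocking behaviour explicit, whereas your contraposition confines attention to non-blocking runs, so the only delicate point is exactly the assume-versus-assert bookkeeping you flag, and the handling of non-termination becomes trivial. Your version is arguably the cleaner derivation of the same fact; just make sure to state explicitly that termination of a finite trace of atomic operations is equivalent to no $\assume$ (or $\mathtt{lock}$) blocking, since that equivalence is what licenses reading ``terminates in a $\neg\phi$ state'' as ``non-blocking run ending in $s_m \not\models \phi$''.
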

\begin{lemma}\label{lem:assumeassrtlhs2} 
 For a trace $\run$, an initial program state $\mathcal{I}$ and a safety property $\phi$, if $\weakestpre{\subst{\run}{assume}{assert}}{\neg \phi} \lland \mathcal{I}$ is satisfiable then the execution of $\run$, starting from $\mathcal{I}$, terminates in a state not satisfying $\phi$.
\end{lemma}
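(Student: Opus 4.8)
The plan is to do an induction on the length of the trace $\run$, peeling off one instruction at a time from the front, and tracking what happens to a witnessing state. Let $\run = op_1.op_2.\cdots.op_k$ and write $\run' = op_2.\cdots.op_k$. By the composition axiom in Figure~\ref{fig:wpstmts}, $\weakestpre{\subst{\run}{assume}{assert}}{\neg\phi} = \weakestpre{op_1'}{\weakestpre{\subst{\run'}{assume}{assert}}{\neg\phi}}$, where $op_1'$ is $op_1$ with any $\assume$ turned into $\assrt$. So if this formula conjoined with $\mathcal{I}$ is satisfiable, pick a model $s_0$. The key sub-claim to establish, for a single instruction $op'$ obtained from $op$ by the $\assume$-to-$\assrt$ rewrite, is: if $s_0 \models \weakestpre{op'}{\psi}$ then $op$ does \emph{not} block at $s_0$ (it makes a step), and the resulting state $s_1$ satisfies $\psi$. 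Once that single-step claim is in hand, I would iterate it down the trace: $s_0 \models \weakestpre{op_1'}{\weakestpre{\subst{\run'}{assume}{assert}}{\neg\phi}}$ gives a non-blocking step to $s_1 \models \weakestpre{\subst{\run'}{assume}{assert}}{\neg\phi}$, and the induction hypothesis applied to $\run'$ from $s_1$ yields termination in a state violating $\phi$. The base case $k=0$ is $\run = \mathrm{skip}$: $\weakestpre{\mathrm{skip}}{\neg\phi} = \neg\phi$, and $s_0 \models \neg\phi \land \mathcal{I}$ says the (trivial) execution already sits in a state violating $\phi$.

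For the single-step claim I would just check the cases of the $\weakestpre{}{}$ definition. For $\Assign{x}{exp}$: $op' = op$, $\weakestpre{op}{\psi} = \subst{\psi}{x}{exp}$, the assignment never blocks, and the standard substitution lemma gives $s_1 = s_0[x \leftarrow \llbracket exp\rrbracket_{s_0}] \models \psi$. For $\assume{\phi'}$: after rewriting, $op' = \assrt{\phi'}$ and $\weakestpre{\assrt{\phi'}}{\psi} = \psi \land \phi'$; so $s_0 \models \phi'$, meaning the \emph{original} $\assume{\phi'}$ does \emph{not} block at $s_0$ and acts as $\nop$, leaving $s_1 = s_0 \models \psi$. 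This is exactly the point of the $\assume \mapsto \assrt$ trick — it lets the weakest precondition remember that the path condition held, which ordinary $\weakestpre{\assume{\phi'}}{\psi} = \phi' \Rightarrow \psi$ would forget. For $\lcas{x}{v_1}{v_2}$, I would expand it via its defining clause into $\assume{x=0}$ followed by $\Assign{x}{1}$ and reuse the two cases just handled. The sequencing clause $op_1.op_2$ is what drives the induction and needs nothing extra beyond Property~\ref{prop:wp}-style compositional reasoning about $\weakestpre{}{}$.

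The main obstacle, and the only genuinely delicate point, is pinning down the semantics precisely enough that "execution terminates in a state not satisfying $\phi$" is a statement about a deterministic run: because every operation in the model is deterministic, the non-blocking step produced at each stage is \emph{the} step, so the run reconstructed from $s_0$ is exactly the run of $\run$ from $s_0$, and — crucially — since a model of $\weakestpre{\cdots}{\neg\phi}\land\mathcal{I}$ exists, $s_0$ can be taken to agree with $\mathcal{I}$, so this is the actual execution the lemma is about. I also need to be careful that $\assrt{\phi'}$ in the rewritten trace never itself "fails" in a way that would break the chain — but by construction each $s_i$ we build satisfies the precondition that forces the corresponding assertion to hold, so no assertion in $\subst{\run}{assume}{assert}$ is violated along this run, which is consistent. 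I expect this to be a short induction once the single-step case analysis is written out; Lemma~\ref{lem:assumeassrtrhs2} is then essentially the contrapositive companion, proved the same way by observing that a blocking execution corresponds to some $\assrt{\phi'}$ whose guard fails, forcing unsatisfiability of the conjunction with $\mathcal{I}$.
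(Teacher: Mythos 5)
Your proposal is correct, and at heart it is the same argument as the paper's: induction on the length of $\run$, a case split on the instruction type, and the observation that the \texttt{assume}-to-\texttt{assert} rewrite makes $\weakestpre{\assrt{\phi'}}{\psi}=\psi\lland\phi'$ certify that the original $\assume{\phi'}$ does not block. The genuine difference is the direction of the decomposition. The paper writes $\run=\run'.a$ and peels the \emph{last} operation: it pushes $\weakestpre{a}{\neg\phi}$ into the postcondition, applies the induction hypothesis to the prefix $\run'$ with the \emph{same} initial state $\mathcal{I}$ but a modified safety property (e.g.\ $\subst{\phi}{E}{x}$ for an assignment $a$), and only argues about the execution of $a$ at the very end. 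You peel the \emph{first} operation, prove an explicit single-step progress claim (a state satisfying $\weakestpre{op'}{\psi}$ does not block on $op$ and steps to a state satisfying $\psi$), and apply the induction hypothesis to the suffix starting from the successor state $s_1$; this implicitly strengthens the statement from the specific $\mathcal{I}$ to an arbitrary start state, which is legitimate since the lemma is quantified over $\mathcal{I}$, but you should say so explicitly when invoking the hypothesis. Your forward route isolates precisely where determinism and the rewrite are used (uniqueness of the successor and non-blocking), and the same single-step claim essentially yields the companion Lemma~\ref{lem:assumeassrtrhs2}; the paper's backward route never needs to treat intermediate states as fresh initial states, at the cost of re-instantiating the safety property inside the induction hypothesis. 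Either way the proof goes through.
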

\subsection{Alternating Finite Automata (AFA)}\label{subsec:afa}
 Alternating finite automata \cite{DBLP:journals/tcs/BrzozowskiL80,Chandra:1981:ALT:322234.322243} are a generalization of nondeterministic finite automata (NFA).
An NFA is a five tuple $\langle S,\Sigma\cup \set{\epsilon},\delta,s_0,S_F\rangle$ with a set of states $S$,
ranged over by $s$, an initial state $s_0$, a set of accepting states $S_F$ and
a transition function $\delta: S \times \Sigma\cup\set{\epsilon} \to \Powerset{S}$. 
For any state $s$ of this NFA, the set of words accepted by $s$ is inductively defined as $\acc(s)=\set{a.\run \mid a\in \Sigma\cup \set{\epsilon}, \sigma \in \Sigma^*, \exists s'.~ s'\in \delta(s,a).~\run \in \acc(s')}$ where $\epsilon\in \acc(s)$ for all $s\in S_F$.
Here, the existential quantifier represents the fact that there should exist
at least one outgoing transition from $s$ along which $a.\run$ gets accepted.
An AFA is a six tuple $\langle \faS,\teS, \Sigma\cup \set{\epsilon},\delta,s_0,S_F \rangle$ with $\Sigma$, $s_0$ and $S_F\subseteq S$ denoting the alphabet, initial state and the set of accepting states respectively. $S=\faS \bigcup \teS$ is the set of all states, ranged over by $s$ and $\delta: S \times \Sigma\cup\set{\epsilon} \to \Powerset{S}$ is the transition function. The set of words accepted by a state of an AFA depends on whether that state is an \emph{existential state} (from the set $\teS$) or a \emph{universal state} (from the set $\faS$).
For an existential state $s\in \teS$, the set of accepted words is inductively defined in the same way as in NFA. For a universal state $s \in \faS$ the set of accepted words are $\acc(s)=\set{a.\run \mid a\in \Sigma\cup \set{\epsilon}, \forall s'\in \delta(s,a).~ \run \in \acc(s')}$ with $\epsilon \in \acc(s)$ for all $s\in S_F$. Notice the change in the quantifier from $\exists$ to $\forall$. In the diagrams of AFA used in this paper, we annotate universal states with $\forall$ symbol and existential states with $\exists$ symbol.
For a state $s$, let $\succset(s,a)=\set{S\mid (s,a,S)\in \delta}$ be the set of $a$-successors of $s$.
For an automaton $\mathcal{A}$, let $\lang{\mathcal{A}}$ be the language accepted by the initial state of that automaton.
For any $\run\in \Sigma^*$ $\len{\run}$ denote the length of $\run$ and $\rev(\sigma)$ denote the reverse of $\sigma$.

\input{petexp}

\section{Our Approach}\label{sec:example}
The overall approach of this paper can be described in the following steps:
(i) Given a concurrent program $\mathcal{P}$, construct all its interleaved traces represented by automaton $\aut{\mathcal{P}}$, as defined in Subsection \ref{subsec:pro};
(ii)   Pick a trace $\sigma$ and a safety property, say $\phi$, to prove for this trace;
(iii) Prove $\sigma$ correct with respect to $\phi$ using Lemma \ref{lem:assumeassrtrhs2} and Lemma \ref{lem:assumeassrtlhs2} and generate a set of traces which are also \emph{provably correct}. Let us call this set $Tr'$;
(iv) Remove set $Tr'$ from the set of traces represented by $\aut{\mathcal{P}}$ and repeat from Step (ii) until either all the traces in $\mathcal{P}$ are proved correct or an erroneous trace is found.

Step (iii) of this procedure, correctness of $\sigma$, can be achieved by checking the unsatisfiability of $\weakestpre{\subst{\sigma}{assume}{assert}}{\neg \phi} \land \mathcal{I}$. However, we are not only interested in checking the correctness of $\sigma$ but also in constructing a set of traces which have a similar reasoning as of $\sigma$. Therefore, instead of computing $\weakestpre{\subst{\sigma}{assume}{assert}}{\neg \phi}$ directly from the weakest precondition axioms of Figure \ref{fig:wpstmts}, we construct an AFA from $\sigma$ and $\neg \phi$. Step (iv) is then achieved by applying automata-theoretic operations such as complementation and subtraction on this AFA. Notion of universal and existential states of AFA helps us in finding a set of sufficient dependencies used in the weakest precondition computation so that any other trace satisfying those dependencies gets captured by AFA. Subsequent subsections covers the construction, properties and use of this AFA in detail.
\newcommand{\NextState}{\mathrm{Next}}
\subsection{Constructing the AFA from a Trace and a Formula}
\begin{definition}\label{def:afa}
An AFA constructed from a trace $\sigma$ of a Program $P$ and a formula $\phi$ is $\afa{\run,\phi}=\langle \faS,\teS,\mathcal{OP_{\epsilon}},s_0,S_F,\delta,\AssnMap,\ResdMap\rangle$, where,
\begin{enumerate}
\item $(\mathcal{OP}_{\epsilon}=\mathcal{OP} \cup \{\epsilon\})$ is the alphabet ranged over by $op$. Here $\mathcal{OP}$ is the set of instructions used in program $P$. Symbol $\epsilon$ acts as an identity element of concatenation and $\weakestpre{\epsilon}{\phi}=\phi$. 
 \item $S=\faS \bigcup \teS$ is the largest set of states, ranged over by $s$ s.t.
  \begin{enumerate}
 \item Every state is annotated with a formula and a prefix of $\sigma$ denoted by $\AssnMap(s)$ and $\ResdMap(s)$ respectively. State $s_0$ is the initial state such that $\AssnMap(s_0)=\phi$, $\ResdMap(s_0)=\sigma$.
  \item \label{create}$s'\in S$ iff either of the following two conditions hold,
 \begin{itemize}
 \item $\exists s\in S$ such that $\AssnMap(s')$ is $\weakestpre{\subst{op}{assume}{assert}}{\AssnMap(s)}$, $\ResdMap(s)=\ResdMap(s').op.\sigma'$ and $\sigma'$ is the largest suffix of $\ResdMap(s)$ such that formula $\AssnMap(s)$ is stable with respect to $\subst{\sigma'}{assume}{assert}$.
 \item $\exists s\in S$ such that $\AssnMap(s)=\bigwedge\set{\phi_1,\cdots,\phi_k}$ or $\AssnMap(s)=\bigvee \set{\phi_1,\cdots,\phi_k}$, $\ResdMap(s)=\ResdMap(s')$, $\AssnMap(s')=\phi'$ and $\phi' \in \set{\phi_1,\cdots,\phi_k}$.
 \end{itemize}
 \item \label{uniexist} A state $s\in S$ is an existential state (universal state) iff $\AssnMap(s)$ is a literal (compound formula). 
 \end{enumerate}
 \item\label{fin} $S_F\subseteq S$ is a set of accepting states such that $s\in S_F$ $\textrm{iff}$ $\weakestpre{\subst{\ResdMap(s)}{assume}{assert}}{\AssnMap(s)}$ is same as $\AssnMap(s)$, i.e. $\AssnMap(s)$ is \emph{stable} with respect to $\subst{\ResdMap(s)}{assume}{assert}$, and
 \item Function $\delta:\!S\!\times\!\mathcal{OP}_{\epsilon}\!\to\!\Powerset{S}$ is defined in Figure \ref{fig:afa:nextst}.
 \end{enumerate}
%
%
 \end{definition}
 \begin{figure}
\scalebox{0.75}{\parbox{\linewidth}{%
\begin{subequations}  
$~~~\delta(s,op)=$
\begin{align*}
       \tikzmark{fsttt}{} & \{s'\} && \mbox{ if }\left\{\begin{array}{l}1.\AssnMap(s')=\weakestpre{\subst{op}{assume}{assert}}{\AssnMap(s)}, \\
       2. s \mbox{ is an existential state, and}\\
       3.\ResdMap(s)=\ResdMap(s').op.\sigma'' \\\mbox{\textbf{where} $\sigma''$ is the longest sequence s.t.} \\ \weakestpre{\subst{\sigma''}{assume}{assert}}{\AssnMap(s)}=\AssnMap(s) \end{array}\right.\tag{\textsc{Literal-Assn}}\label{tag:lit-assn}\\
       & \{s\} && \mbox{ if }\left\{\begin{array}{l}1.\AssnMap(s)=\weakestpre{\subst{op}{assume}{assert}}{\AssnMap(s)}, \mbox{ and}\\
	2.s \mbox{ is an existential state}\end{array}\right.\tag{\textsc{Literal-Self-Assn}}\label{tag:lit-self-assn}\\
       & \set{s_1,\cdots,s_k} && \mbox{ if }\left\{\begin{array}{l}1.\AssnMap(s)=\bigwedge_k \phi_k~\mbox{\textbf{or}}~ \AssnMap(s)=\bigvee_k \phi_k,~\\2. \AssnMap(s_k)=\phi_k,~\\3. \forall k, \ResdMap(s)=\ResdMap(s_k), \\4. op=\epsilon \end{array}\right.\tag{\textsc{Compound-Assn}}\label{tag:compound-assn}\\
       & \{\} && \mbox{ otherwise }\\
      \tikzmark{snddd}{} & &&
    \end{align*}
  \end{subequations}
}}
\vspace{-0.5cm}
\caption{Transition function used in the Definition \ref{def:afa}}
\vspace{-0.5cm}
\label{fig:afa:nextst}
\end{figure}
Following Point \ref{create}, any state added to $S$ is either annotated with a smaller $\ResdMap$ or a smaller formula compared to the states already present in $S$. Further, every formula and trace $\sigma$ is of finite length. Hence the set of states $S$ is finite. By Point \ref{uniexist} of this construction, a state $s$ where $\AssnMap(s)$ is a compound formula, is always a universal state irrespective of whether $\AssnMap(s)$ is a conjunction or a disjunction of clauses. The reason behind this decision will be clear shortly when we will use this AFA to inductively construct the weakest precondition $\weakestpre{\subst{\sigma}{assume}{assert}}{\phi}$. Note that we assume every formula is normalized in CNF.
Figure \ref{fig:peterson:afatrace} shows an example trace $\sigma=\lb{abApqPrcs}$ of Peterson's algorithm. This trace is picked from the Peterson's specification in Figure \ref{fig:petersonpro}. To prove $\sigma$ correct with respect to the safety formula $\phi\defeq(\ell_2=2)$ we first construct $\afa{\run,\neg \phi}$ which will later help us to derive $\weakestpre{\subst{\sigma}{assume}{assert}}{\neg \phi}$. This AFA is shown in Figure \ref{fig:peterson:afa}. For a state $s$, $\AssnMap(s)$ is written inside the rectangle representing that state and $\ResdMap(s)$ is written inside an ellipse next to that state. We show here some of the steps illustrating this construction.
\begin{figure*}[t]
\removelatexerror
 \begin{minipage}[b]{0.6\linewidth}
\hspace{0.5cm}
 \scalebox{0.65}{\parbox{\linewidth}{%
 \stepninehorizononly{blue!20}}}
 \caption{AFA of trace given in Figure \ref{fig:peterson:afa}(b) and $\phi=\neg (\ell_2=2)$}
 \label{fig:peterson:afa}
 \end{minipage}
\hspace{0.6cm}
\begin{minipage}[b]{0.35\linewidth}
  \scalebox{0.8}{\parbox{\linewidth}{%
 \hspace{1cm}
 $\begin{array}{l}
\lb{a.~} \Assign{flag_1}{\true}\\
\lb{b.~} \Assign{turn}{2}\\
\lb{A.~} \assume{\begin{array}{l}\mathrm{flag_2=\false}~ \\ ||~ \mathrm{turn=1}\end{array}}\\
\lb{p.~} \Assign{flag_2}{\true}\\
\lb{q.~} \Assign{turn}{1}\\
\lb{P.~} \assume{\begin{array}{l}\mathrm{flag_1=\false}~ \\||~ \mathrm{turn=2}\end{array}}\\
\lb{r.~} \Assign{res}{2}\\
\lb{c.~} \Assign{res}{1}\\
\lb{s.~} \Assign{\ell_2}{res}
\end{array}$}}
 \vspace{1cm}
 \caption{A trace from Peterson's algorithm }
 \label{fig:peterson:afatrace}
 \end{minipage}
\vspace{-0.2cm}
\end{figure*}
\begin{enumerate}
 \item By Definition \ref{def:afa}, we have $\AssnMap(s_0)=\neg (\ell_2=2)$ and $\ResdMap(s_0)=\sigma=\lb{abApqPrcs}$ for initial state $s_0$.
 \item In a transition $\delta(s,op)=\{s'\}$ created by Rule \ref{tag:lit-assn} the state $s'$ is annotated with the weakest precondition of an operation $op$, taken from $\ResdMap(s)$, with respect to $\AssnMap(s)$. Operation $op$ is picked in such a way that  $\AssnMap(s)$ is \emph{stable} with respect to every other operation present after $op$ in $\ResdMap(s)$. Such transitions capture the inductive construction of the weakest precondition for a given $\phi$ and trace $\sigma$. Transition $\delta(s_0,\lb{s})=\{s_1\}$ in Figure \ref{fig:peterson:afa} is created by this rule as $\weakestpre{\subst{\lb{s}}{assume}{assert}}{\AssnMap(s_0)}=\AssnMap(s_1)$, and $\ResdMap(s_0)=\ResdMap(s_1).\lb{s}$. 
 \item  In any transition created by Rule \ref{tag:compound-assn}, say from $s$ to $s_1,\cdots,s_k$, the states $s_1,\cdots s_k$ are annotated with the subformulae of $\AssnMap(s)$. 
   For example, transitions $\delta(s_3,\epsilon)=\{s_4,s_5\}$ and $\delta(s_7,\epsilon)=\{s_8,s_9\}$.
 \item Transition $\delta(s_8,\lb{a})=\{s_{12}\}$ follows from the rule \ref{tag:lit-assn}. Note that $\ResdMap(s_{12})$ is empty and hence by Point \ref{fin} of Definition \ref{def:afa}, $s_{12}$ is an accepting state. 
 Following the same reasoning, states $s_6$, $s_{10}$ and $s_{13}$ are also set as accepting states.
 \item  Rule \ref{tag:lit-self-assn} adds a self transition at a state $s$ on a symbol $op\in \mathcal{OP}_{\epsilon}$ such that  $\AssnMap(s)$ is \emph{stable} with respect to $\subst{op}{assume}{assert}$. For example, transitions $\delta(s_0,op)=\{s_0\}$ where $op\in \mathcal{OP}_{\epsilon}\setminus\{\lb{s,A,P}\}$. 
 \end{enumerate}

 The following lemma relates $\ResdMap(s)$ at any state to the set of words accepted by $s$ in this AFA.
\begin{lemma}\label{lem:runinafa}
 Given a $\run\in \lang{\aut{\mathcal{P}}}$ and $\phi$, let $\afa{\run,\phi}$ be the AFA satisfying Definition \ref{def:afa}.
 For every state $s$ of this AFA, the condition $\rev(\ResdMap(s))\in \acc(s)$ holds.
\end{lemma}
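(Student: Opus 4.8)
My plan is to prove Lemma~\ref{lem:runinafa} by well-founded induction on the pair $\mu(s)=(\len{\ResdMap(s)},\ \text{size of }\AssnMap(s))$ ordered lexicographically, where the ``size'' of a formula counts, say, its literal occurrences. This order is well-founded because, as observed right after Definition~\ref{def:afa}, every state placed in $S$ carries either a strictly shorter $\ResdMap$ or a strictly smaller formula than the state it was spawned from, and both $\run$ and $\phi$ are finite; moreover the three rules of Figure~\ref{fig:afa:nextst} move precisely along this order (Rule~\ref{tag:lit-assn} strictly shortens the residual, Rule~\ref{tag:compound-assn} keeps the residual and strictly shrinks the formula, Rule~\ref{tag:lit-self-assn} is a self-loop and adds no new state). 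So I fix $s$, assume the statement for every $s'$ with $\mu(s')<\mu(s)$, write $\tau=\ResdMap(s)$, and show $\rev(\tau)\in\acc(s)$.

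First the base case: if $\tau=\epsilon$ then $\weakestpre{\subst{\tau}{assume}{assert}}{\AssnMap(s)}=\weakestpre{\epsilon}{\AssnMap(s)}=\AssnMap(s)$, so $s\in S_F$ by Point~\ref{fin} of Definition~\ref{def:afa}, hence $\epsilon=\rev(\epsilon)\in\acc(s)$ whether $s$ is existential or universal. For the inductive step with $\tau\neq\epsilon$ I split on the kind of $s$. If $s$ is universal, i.e.\ $\AssnMap(s)=\bigwedge_i\phi_i$ or $\bigvee_i\phi_i$, then by Rule~\ref{tag:compound-assn} its only outgoing edge is $\delta(s,\epsilon)=\set{s_1,\dots,s_k}$ with $\AssnMap(s_i)=\phi_i$ and $\ResdMap(s_i)=\tau$; each $s_i$ has the same residual but a strictly smaller formula, so $\mu(s_i)<\mu(s)$ and the induction hypothesis gives $\rev(\tau)\in\acc(s_i)$ for all $i$; since $s$ is universal and this $\epsilon$-edge is its sole outgoing edge, $\rev(\tau)=\epsilon\cdot\rev(\tau)\in\acc(s)$. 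If instead $s$ is existential, let $\sigma''$ be the maximal suffix of $\tau$ each of whose operations leaves $\AssnMap(s)$ stable (this is exactly the stable suffix identified by Rules~\ref{tag:lit-assn}/\ref{tag:lit-self-assn}). By Rule~\ref{tag:lit-self-assn} every such operation $op'$ carries a self-loop $\delta(s,op')=\set{s}$, so reading $\rev(\sigma'')$ keeps us at $s$; in the degenerate case $\sigma''=\tau$ we get $\weakestpre{\subst{\tau}{assume}{assert}}{\AssnMap(s)}=\AssnMap(s)$ by composing the per-operation identities, hence $s\in S_F$ and therefore $\rev(\tau)=\rev(\sigma'')\cdot\epsilon\in\acc(s)$. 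Otherwise write $\tau=\tau_1\cdot op\cdot\sigma''$ with $op$ the last operation of $\tau$ outside the stable suffix; Rule~\ref{tag:lit-assn} supplies a state $s'$ with $\delta(s,op)=\set{s'}$ and $\ResdMap(s')=\tau_1$, which is strictly shorter, so by the induction hypothesis $\rev(\tau_1)\in\acc(s')$; since $s$ is existential and $s'\in\delta(s,op)$ this gives $op\cdot\rev(\tau_1)\in\acc(s)$, and prepending the $\sigma''$-self-loops yields $\rev(\sigma'')\cdot op\cdot\rev(\tau_1)=\rev(\tau_1\cdot op\cdot\sigma'')=\rev(\tau)\in\acc(s)$, closing the induction.

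The step I expect to need the most care is the existential case, and precisely the reading of ``the longest suffix $\sigma''$ with $\weakestpre{\subst{\sigma''}{assume}{assert}}{\AssnMap(s)}=\AssnMap(s)$'': the argument above needs each individual operation of $\sigma''$ to stabilize $\AssnMap(s)$ so that the self-loops of Rule~\ref{tag:lit-self-assn} are genuinely present, which is the reading the construction intends---cf.\ the remark accompanying Figure~\ref{fig:afa:nextst} that ``$\AssnMap(s)$ is stable with respect to every other operation present after $op$ in $\ResdMap(s)$''---so I would first state this operation-wise characterisation explicitly (also noting that $=$ on formulas is the paper's syntactic equality under CNF normalisation, so that ``each operation stable'' does compose to ``$\sigma''$ stable''). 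A minor secondary point is simply choosing the induction measure so that the residual-shortening edges (Rule~\ref{tag:lit-assn}) and the formula-shrinking $\epsilon$-edges (Rule~\ref{tag:compound-assn}) are both strictly decreasing; the lexicographic pair above does this and matches the finiteness argument already given after Definition~\ref{def:afa}. Finally, the hypothesis $\run\in\lang{\aut{\mathcal{P}}}$ is not actually used in this lemma---it merely records the provenance of $\run$---so I would not invoke it.
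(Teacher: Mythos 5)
Your proposal is correct and follows essentially the same route as the paper's own proof: induction on the lexicographic measure (length of $\ResdMap(s)$, size of $\AssnMap(s)$), handling universal states via the \textsc{Compound-Assn} $\epsilon$-transition and existential states via the \textsc{Literal-Self-Assn} self-loops on the stable suffix followed by the \textsc{Literal-Assn} edge. The only difference is bookkeeping: you anchor the induction at empty residuals and absorb accepting states with non-empty stable residuals into the existential/universal cases, whereas the paper takes all accepting states as the base case; your variant is, if anything, slightly more careful since self-loops are only guaranteed at existential states.
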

A detailed proof of this lemma is given in Appendix {\ref{prf:lemruninafa}}. 
This lemma uses the reverse of $\ResdMap(s)$ in its statement because the weakest precondition of a sequence is constructed by scanning it from the end. This can be seen in the transition rule \ref{tag:lit-assn}. As a corollary, $\rev(\sigma)$ is also accepted by this AFA because by Definition \ref{def:afa}, $\ResdMap(s_0)$ is $\sigma$.
\qed

\begin{figure}
\hspace{0.4cm}\scalebox{0.75}{\parbox{\linewidth}{%
\begin{subequations}  
$~~~\HMap(s)=$
\begin{align*}
       \tikzmark{fsttt}{} & \AssnMap(s) & \mbox{ if }s\in S_F \tag{\textsc{Base-case}}\label{tag:base}\\
       & \bigwedge_k \HMap(s_k) & \mbox{ if } \delta(s,\epsilon)=\{s_1,\cdots,s_k\} 
	\mbox{ and } \AssnMap(s)=\bigwedge_k \AssnMap(s_k) \tag{\textsc{Conj-case}}\label{tag:conj}\\
	&
        \bigvee_k \HMap(s_k) & \mbox{ if } \delta(s,\epsilon)=\{s_1,\cdots,s_k\}
	      \mbox{ and } \AssnMap(s)=\bigvee_k \AssnMap(s_k) \tag{\textsc{Disj-case}}\label{tag:disj}\\ 
      \tikzmark{snddd}{} & \HMap(s')  &\mbox{ if } (s,op,\{s'\})\in \delta \tag{\textsc{Lit-case}}\label{tag:prop}
    \end{align*}
  \end{subequations}
\EmBracet[black]{fsttt.north west}{snddd.west}
}}
\caption{Rules for $\HMap$ construction}
\vspace{-0.5cm}
\label{fig:hmap}
\end{figure}
\subsection{Constructing the weakest precondition from $\afa{\run,\phi}$} 
After constructing $\afa{\run,\phi}$ the rules given in Figure \ref{fig:hmap} are used to inductively construct and assign a formula, $\HMap(s)$, to every state $s$ of $\afa{\run,\phi}$. 
Figure \ref{fig:petex:wp} shows the AFA of Figure \ref{fig:peterson:afa} where states are annotated with formula $\HMap(s)$. This formula is shown in the ellipse beside every state. For better readability we do not show $\ResdMap(s)$ in this figure. 

Following Rule \ref{tag:base}, $\HMap$ of $s_6,s_{12}$, and $s_{13}$ are set to $\false$ whereas $\HMap(s_{10})$ is set to $\mathrm{flag_2=\false}$. By Rule \ref{tag:prop}, $\HMap$ of $s_5,s_8$ and $s_{11}$ are also set to $\false$. After applying Rule \ref{tag:disj} for transition $\delta(s_9,\epsilon)=\{s_{10},s_{11}\}$, $\HMap(s_9)$ is set to $\mathrm{flag_2=\false}$. Similarly, using Rule \ref{tag:conj} we get $\HMap(s_7)$ as $\false$. Finally, $\HMap(s_0)$ is also set to $\false$.
%
\begin{figure*}
 \removelatexerror
 \begin{minipage}[b]{0.6\linewidth}
 \hspace{-0.5cm}
 \scalebox{0.65}{\parbox{\linewidth}{%
 \stepninehorizon{blue!20}}}
 \caption{$\HMap$ construction for the running example}
  \label{fig:petex:wp}
 \end{minipage}
 \hspace{-1cm}
 \begin{minipage}[b]{0.5\linewidth}
   \scalebox{0.75}{\parbox{\linewidth}{%
  \begin{algorithm2e}[H]
\SetAlgoLined
\SetKwFunction{isReady}{isReady}
\KwData{Input AFA $\langle \faS,\teS,\mathcal{OP}\cup \set{\epsilon},s_0,S_F,\AssnMap,\ResdMap\rangle$}
\KwResult{Modified AFA}
Let $s$ be a state in AFA such that $s\in \faS$, $\delta(s,\epsilon)=\{s_1,\cdots,s_k\}$, $\HMap(s)$ is unsatisfiable, and $\AssnMap(s)=\bigwedge_k \AssnMap(s_k)$\;
Let $\unsatcore(s) \subseteq \Powerset{\{s_1,\cdots,s_k\}}$ such that $\{s'_1,\cdots,s'_n\} \in \unsatcore(s)$ $\textrm{iff}$ $\set{\HMap(s'_1),\cdots,\HMap(s'_n)}$ is a minimal unsat core of $\bigwedge_k \HMap(s_k)$ \label{mod1}\;
Create an empty set $\mathtt{U}$\;
\ForEach{$\{s'_1,\cdots,s'_n\}\in \unsatcore(s)$}{\label{modloop}
    create a new universal state $s_u \in \faS$ and add it to the set $\mathtt{U}$\label{mod3}\;
    Set $\AssnMap(s_u)=\bigwedge_i\AssnMap(s'_i)$ \label{mod4}\;
    Set $\HMap(s_u)=\bigwedge_i\HMap(s'_i)$\label{mod5}\;
    Add a transition by setting $\delta(s_u,\epsilon)=\{s'_1,\cdots,s'_n\}$\label{mod6}\;
  }
Remove transition $\delta(s,\epsilon)=\{s_1,\cdots,s_k\}$\;
Convert $s$ to an existential state\;
Add a transition from $s$ on $\epsilon$ by setting $\delta(s,\epsilon)=\mathtt{U}$ where $\mathtt{U}$ is the set of universal states created one for each element of $\unsatcore(s)$\label{mod10}\;
\caption{Converting universal to existential states while preserving Lemma \ref{lem:hoare}}
\label{fig:afamodification}
\end{algorithm2e}}}
\vspace{0.5cm}
 \end{minipage}
 \vspace{-1cm}
\end{figure*}
$\HMap$ constructed inductively in this manner satisfies the following property;
\begin{lemma}\label{lem:hoare}
  Let $\afa{}$ be an AFA constructed from a trace and a post condition as in Definition \ref{def:afa} then for every state $s$ of this AFA and for every word $\sigma$ accepted by state $s$, $\HMap(s)$ is logically equivalent to $\weakestpre{\subst{\rev(\sigma)}{assume}{assert}}{\AssnMap(s)}$.
\end{lemma}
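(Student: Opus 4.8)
The plan is to prove the statement by well-founded induction on states of $\afa{\run,\phi}$, with a nested induction on the length of the accepted word. Before that I would record that $\HMap$ is itself well defined: by Point \ref{create} of Definition \ref{def:afa}, each non-self transition out of a state strictly decreases the lexicographic measure $\mu(s)=(\len{\ResdMap(s)},\,|\AssnMap(s)|)$ --- Rule \ref{tag:lit-assn} strictly shortens $\ResdMap$, while Rule \ref{tag:compound-assn} keeps $\ResdMap$ fixed but replaces the compound formula by a strictly smaller subformula --- so the recursion in Figure \ref{fig:hmap} terminates. Its four clauses are exhaustive (a state either lies in $S_F$, or carries a compound formula and is universal, or carries a literal and is existential with a unique forward transition via Rule \ref{tag:lit-assn}), and the only overlap is an accepting universal state, where Base-case and Conj/Disj-case must agree; this I would settle using $\weakestpre{\subst{\ResdMap(s)}{assume}{assert}}{\bigwedge_k\phi_k}=\bigwedge_k \weakestpre{\subst{\ResdMap(s)}{assume}{assert}}{\phi_k}$ (Property \ref{prop:wp} extended to sequences), which forces each subformula $\phi_k$ to be stable with respect to $\ResdMap(s)$, so every subformula-state is itself accepting and the two clauses coincide. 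Rule \ref{tag:prop} is read throughout as referring to the unique forward transition, not to the self-loops of Rule \ref{tag:lit-self-assn}.

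For the lemma I would then prove, by induction on the pair $(\mu(s),\len{\sigma})$ in the lexicographic order, that for every state $s$ and every $\sigma\in\acc(s)$ one has $\weakestpre{\subst{\rev(\sigma)}{assume}{assert}}{\AssnMap(s)}\equiv\HMap(s)$, splitting on $s$ and the leading letter of $\sigma$. If $\sigma=\epsilon$ then $s\in S_F$, and $\weakestpre{\epsilon}{\AssnMap(s)}=\AssnMap(s)=\HMap(s)$ by Rule \ref{tag:base}. If $s$ is existential and $\sigma=op.\tau$ with $op$ a self-loop letter, then by Rule \ref{tag:lit-self-assn} the operation $\subst{op}{assume}{assert}$ is stable for $\AssnMap(s)$, $\tau\in\acc(s)$, $\rev(\sigma)=\rev(\tau).op$, and the $\weakestpre{}{}$-composition axiom gives $\weakestpre{\subst{\rev(\sigma)}{assume}{assert}}{\AssnMap(s)}=\weakestpre{\subst{\rev(\tau)}{assume}{assert}}{\AssnMap(s)}$, which is $\HMap(s)$ by the induction hypothesis on $(s,\tau)$ (same $\mu$, shorter word). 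If $s$ is existential and $op$ is the forward letter with $\delta(s,op)=\{s'\}$, then $\AssnMap(s')=\weakestpre{\subst{op}{assume}{assert}}{\AssnMap(s)}$, $\HMap(s)=\HMap(s')$ by Rule \ref{tag:prop}, $\tau\in\acc(s')$, and the same factoring yields $\weakestpre{\subst{\rev(\sigma)}{assume}{assert}}{\AssnMap(s)}=\weakestpre{\subst{\rev(\tau)}{assume}{assert}}{\AssnMap(s')}\equiv\HMap(s')$ by the induction hypothesis on $(s',\tau)$, legitimate since $\mu(s')<\mu(s)$. Finally, if $s$ is universal its only move is $\delta(s,\epsilon)=\{s_1,\dots,s_k\}$ with $\AssnMap(s_i)=\phi_i$ and $\AssnMap(s)=\bigwedge_i\phi_i$ or $\bigvee_i\phi_i$; then $\sigma\in\acc(s_i)$ for every $i$, $\HMap(s)=\bigwedge_i\HMap(s_i)$ (resp. $\bigvee_i$) by Rule \ref{tag:conj} (resp. \ref{tag:disj}), and since $\mu(s_i)<\mu(s)$ the induction hypothesis together with Property \ref{prop:wp} (distributivity of $\weakestpre{}{}$ over $\wedge$ and $\vee$ for the deterministic operations of our model, extended to the sequence $\rev(\sigma)$) gives $\weakestpre{\subst{\rev(\sigma)}{assume}{assert}}{\AssnMap(s)}\equiv\bigwedge_i\HMap(s_i)=\HMap(s)$ (resp. $\bigvee_i$). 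Since every existential state carries only a harmless $\epsilon$-self-loop (by $\weakestpre{\epsilon}{\phi}=\phi$) plus self-loops on stable letters and one forward letter, and its forward letter is never a self-loop letter, the leading-letter split is exhaustive and the transition taken is determined, which closes the induction.

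The conceptual heart --- and the thing one must believe before any of the above calculation is worth writing --- is that a single formula $\HMap(s)$ can equal the weakest precondition of \emph{every} word accepted by $s$. That is exactly what the construction is engineered for: Rule \ref{tag:lit-self-assn} makes the self-loop letters precisely the $\weakestpre{}{}$-neutral ones (so reordering or repeating them cannot change the precondition), and the alternation at a compound state mirrors the distributivity of $\weakestpre{}{}$ over $\wedge$ and $\vee$ (so the branching of the automaton tracks the branching of the weakest-precondition computation). I therefore expect the real difficulty to be bookkeeping rather than depth: making the nested induction well-founded, carving out the self-loop prefix of an accepted word, checking the leading-letter case analysis is exhaustive and mutually exclusive in the presence of the $\epsilon$-self-loops at every existential state, and --- the one genuinely fiddly point --- reconciling Rule \ref{tag:base} with Rules \ref{tag:conj}/\ref{tag:disj} at accepting universal states, which the stability-of-subformulae observation above dispatches.
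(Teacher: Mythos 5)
Your proof is correct and takes essentially the same route as the paper's: a well-founded induction grounded in the $(\len{\ResdMap(s)},\,\text{formula-size})$ ordering, with accepting states handled via stability and Rule \ref{tag:base}, universal states via Property \ref{prop:wp}, and existential states by splitting an accepted word into stable self-loop letters plus the unique forward transition and invoking Rule \ref{tag:prop}. Your extras --- the explicit well-definedness/consistency check for $\HMap$ and the nested induction on word length (peeling self-loop letters one at a time instead of as a single stable block, as the paper does) --- are bookkeeping refinements of the same argument, not a different approach.
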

Here we present the proof outline. Detailed proof is given in Appendix \ref{prf:lemhoare}.
First consider the accepting states of $\afa{}$. For example, states $s_6$, $s_{10}$, $s_{12}$ and $s_{13}$ of Figure \ref{fig:petex:wp}. Following the definition of an accepting state and by the self-loop adding transition rule \ref{tag:lit-self-assn}, every word $\sigma$ accepted by such an accepting state $s$ satisfies $\weakestpre{\subst{\rev(\sigma)}{assume}{assert}}{\AssnMap(s)}=\AssnMap(s)$. Therefore, setting $\HMap(s)$ as $\AssnMap(s)$ for these accepting states, as done in Rule \ref{tag:base} completes the proof for accepting states.

Now consider a state  $s$ with transition $\delta(s,\epsilon)=\{s_1,\cdots,s_k\}$, created using Rule \ref{tag:compound-assn}, and let $\sigma$ be a word accepted by $s$. By construction, $s$ must be a universal state and hence $\sigma$ must be accepted by each of $s_1,\cdots,s_k$ as well. Using this lemma inductively on successor states $s_1,\cdots,s_k$ (induction on the formula size) we get $\weakestpre{\subst{\sigma}{assume}{assert}}{\AssnMap(s_i)}=\HMap(s_i)$ for all $i\in\{1\cdots k\}$. Now we can apply Property \ref{prop:wp} depending on whether $\AssnMap(s)$ is a conjunction or a disjunction of $\AssnMap(s_k)$. By replacing $\AssnMap(s)$ with $\bigvee_k \AssnMap(s_k)$($\bigwedge_k \AssnMap(s_k)$) and $\HMap(s)$ with $\bigvee_k \HMap(s_k)$($\bigwedge_k \HMap(s_k)$)
completes the proof. Note that, making $s$ as a universal state when $\AssnMap(s)$ is either a conjunction or a disjunction allowed us to use Property \ref{prop:wp} in this proof. Otherwise, if we make $s$ an existential state when $\AssnMap(s)$ is a disjunction of formulae then we can not prove this lemma for states where $\HMap(s)$ is constructed using Rule \ref{tag:disj}.
\qed 

This lemma serves two purposes. First, it checks the correctness of a trace $\sigma$ w.r.t. a safety property for which this AFA was constructed. If $\HMap(s_0) \lland \mathcal{I}$ is unsatisfiable, as in our Peterson's example trace, then $\sigma$ is declared as correct. Second, it guarantees that every trace accepted by this AFA, that is present in the set of all traces of $\mathcal{P}$, is also safe and hence we can skip proving their correctness altogether. Removing such traces is equivalent to subtracting the language of this AFA from the language representing the set of all traces. Then a natural question to ask is if we can increase the set of accepted words of this AFA while preserving Lemma \ref{lem:hoare}.

\subsection{Enlarging the set of words accepted by $\afa{\run,\phi}$}\label{subsec:trans}
\begin{figure*}[t]
 \removelatexerror
\begin{minipage}{\linewidth}
\begin{minipage}[b]{0.2\linewidth}
  \scalebox{0.7}{\parbox{\linewidth}{%
  \centering{\hspace{1cm}
  \exafa{1}}}}
\caption{Example Trace}
\label{fig:trans1ex}
\end{minipage}
\hspace{-.75cm}
\begin{minipage}[b]{0.4\linewidth}
  \scalebox{0.65}{\parbox{\linewidth}{%
	    \exafaoneannotated{blue!20}}}
	  \caption{AFA for $\sigma$ given in Figure \ref{fig:trans1ex}}
	  \label{fig:trans1afa}
\end{minipage}
\hspace{-1cm}
\begin{minipage}[b]{0.4\linewidth}
\hspace{-1cm}
\scalebox{0.65}{\parbox{\linewidth}{%
  \stepninehorizonhmapextra}}
 \caption{AFA of Figure \ref{fig:petex:wp} after Modification}
  \label{fig:petex:redraw}
 \end{minipage}
\end{minipage}
\end{figure*}
\IEEEPARstart{\textbf{Converting Universal States to Existential States }}
Figure \ref{fig:trans1ex} shows an example trace $\sigma=\lb{abcde}$ obtained from the parallel composition of some program $P$.
Figure \ref{fig:trans1afa} shows the AFA constructed for $\sigma$ and $\phi$ as $S<t \lland z<x$. From Lemma \ref{lem:hoare} we get $\weakestpre{\sigma}{\phi}$ as $\false$. Note that the $\weakestpre{\sigma}{S<t}$ and $\weakestpre{\sigma}{z<x}$ are unsatisfiable, i.e. we have two ways to derive the unsatisfiability of $\weakestpre{\sigma}{\phi}$;
one is due to the operation $\lb{d}$,
and the other is due to the operation $\lb{a}$ followed by operation $\lb{e}$.
In this example, any word that enforces either of these two ways will derive $\false$  as the weakest precondition.
For example, the sequence $\run'=\lb{adcbe}$ is not accepted by the AFA of Figure \ref{fig:trans1afa}
but the condition $\weakestpre{\rev(\sigma')}{\neg \phi}=\false$ follows from $\weakestpre{\lb{d}}{\neg \phi}=\false$ 
which is already captured in the AFA of Figure \ref{fig:trans1afa}. 
Note that states $s_1$ and $s_2$ in Figure \ref{fig:trans1afa} are annotated with unsatisfiable $\HMap$ assertion.
It seems sufficient to take any one of these branches to argue the unsatisfiability of $\HMap(s_0)$ because $\HMap(s_0)$, by definition, is a \emph{conjunction} of $\HMap(s_1)$ and $\HMap(s_2)$. Therefore, if we convert $s_0$, a universal state, to an existential state then the modified AFA will accept $\lb{adcbe}$. Let us look at 
Algorithm \ref{fig:afamodification} to see the steps involved in this transformation. This algorithm picks a universal state $s$ such that $\AssnMap(s)$ is a \emph{conjunction} of clauses and only a subset of its successors are sufficient to make $\HMap(s)$ unsatisfiable. State $s_0$ of Figure \ref{fig:trans1afa} is one such state. For each such minimal subsets of its successors, this algorithm creates a universal state, as shown in Line \ref{mod3} of this algorithm. It is easy to see that $\HMap(s_u)$ is also unsatisfiable. Before adding $\delta(s_u,\epsilon)=\{s'_1,\cdots,s'_n\}$ transition in AFA this algorithm sets $\AssnMap(s_u)$ as $\bigwedge_i \AssnMap(s'_i)$. By construction, every word accepted by $s_u$ must be accepted by $s'_1,\cdots,s'_n$. Each of these states $s'_1,\cdots,s'_n$ satisfy Lemma \ref{lem:hoare}. Hence Lemma \ref{lem:hoare} continues to hold for these newly created universal states as well.
Now consider a newly created transition $(s,\epsilon,\mathtt{U})$ in Line \ref{mod10}. For any state $s''\in \mathtt{U}$, $\AssnMap(s)$ logically implies $\AssnMap(s'')$ because $s''$ represents a subset of the original successors of $s$, viz. $s_1,\cdots,s_k$. As $s$ is now an existential state, any word accepted by $s$, say $\sigma'$, is accepted by at least one state in $\mathtt{U}$, say $s'$. Using Lemma \ref{lem:hoare} on $s'$, $\HMap(s')$ is logically equivalent to $\weakestpre{\subst{\rev(\sigma')}{assume}{assert}}{\AssnMap(s')}$. Using unsatisfiability of $\HMap(s)$ and $\HMap(s')$ and the monotonicity property of the weakest precondition, Property \ref{prop:wpsubset}, we get that $\HMap(s)$ is logically equivalent to $\weakestpre{\subst{\rev(\sigma')}{assume}{assert}}{\AssnMap(s)}$. 
This transformation is formally proved correct in Appendix \ref{prf:modification1hoare}.

\IEEEPARstart{\textbf{Adding More transitions to $\afa{\run,\phi}$ using the Monotonicity Property of the Weakest Precondition }}
We further modify $\afa{\run,\phi}$ by adding more transitions. For any two states $s$ and $s'$ such that $\AssnMap(s) $ and $\AssnMap(s')$ are literals, both $\HMap(s)$ and $\HMap(s')$ are unsatisfiable, and there exists a symbol $a$ (can be $\epsilon$ as well) such that $\weakestpre{\subst{a}{assume}{assert}}{\AssnMap(s)}$ logically implies $\AssnMap(s')$, an edge labeled $\mathrm{a}$ is added from $s$ to $s'$. This transformation also preserves Lemma \ref{lem:hoare} following the same monotonicity property, Property \ref{prop:wpsubset} used in the previous transformation. Similar argument holds when $\HMap(s)$ and $\HMap(s')$ are valid and $\AssnMap(s')\limp \weakestpre{a}{\AssnMap(s)}$ holds. The rules of adding edges are shown in Figure \ref{fig:addedge}.
\begin{figure}
\scalebox{0.8}{\parbox{\linewidth}{%
\begin{subequations}  
$ \delta(s,op)=\delta(s,op)\cup\{s'\} \mbox{ iff }$\\
\begin{align*}
       \tikzmark{fsttt}{} &  \HMap(s)\mbox{ and } \HMap(s')\mbox{ are unsatisfiable,} \\
       & (s)\mbox{ is a literal, and}\\
       &{\subst{op}{assume}{assert}}{\AssnMap(s)}\limp \AssnMap(s')\tag{\textsc{Rule-Unsat}} \label{tag:unsatrule}\\
  & \mathbf{OR}\\
	&  \HMap(s)\mbox{ and } \HMap(s')\mbox{ are valid}\\
	&(s)\mbox{ is a literal, and}\\
      \tikzmark{snddd}{} &(s')\limp \weakestpre{\subst{op}{assume}{assert}}{\AssnMap(s)}\tag{\textsc{Rule-Valid}} \label{tag:validrule}
    \end{align*}
  \end{subequations}
\EmBracet[black]{fsttt.north west}{snddd.west}
}}
\caption{Rules for adding more edges}
\vspace{-0.5cm}
\label{fig:addedge}
\end{figure}

Figure \ref{fig:petex:redraw} shows the AFA of Figure \ref{fig:petex:wp} modified by above transformations. Rule \ref{tag:unsatrule} adds an edge from $s_4$ to $s_8$ on symbol $\epsilon$ because $\HMap(s_4)$ and $\HMap(s_8)$ are unsatisfiable and $\weakestpre{\epsilon}{\AssnMap(s_4)}$ logically implies $\AssnMap(s_8)$. Same rule also 
adds a self loop at $s_8$ on operation $\lb{P}$ and a self loop at $s_2$ on operation $\lb{A}$.  
Transformation by Algorithm \ref{fig:afamodification} removes the transition from $s_7$ to $s_{9}$ and all other states reachable from $s_{9}$. 
Now consider a trace $\rev(\lb{abpqPArcs})$ that is accepted by this modified AFA in Figure \ref{fig:petex:redraw} but was not accepted by the original AFA of Figure \ref{fig:petex:wp}. Note that $\weakestpre{\lb{abpqPArcs}}{\neg(\ell_2=2)}$ is unsatisfiable and this is a direct consequence of Lemma \ref{lem:hoare}. Because of the transformations presented in this sub-section we do not need to reason about this trace separately. 
This transformation is formally proved correct in Appendix \ref{prf:modification2hoare}.
\begin{figure}[t]
 \removelatexerror
 \hspace{.7cm}
  \scalebox{0.75}{\parbox{\linewidth}{%
\begin{algorithm2e}[H]
\SetAlgoLined
\SetKwFunction{isReady}{isReady}
\KwIn{A concurrent program $\mathcal{P}=\set{p_1,\cdots,p_n}$ with safety property map $\overline{\AssnM}$}
\KwResult{$yes$, if program is safe else a counterexample}
Let $\aut{\mathcal{P}}$ bet the automaton that represents the set of all the SC executions of $P$ (as defined in Section \ref{sec:prelim})\;
Set {$\mathtt{tmp}:=\mathcal{L}(\htt{\A{P}})$}\;
\While{$\mathtt{tmp}$ is not empty}{\label{finalloop}
 \label{four} Let $\run \in \mathtt{tmp}$ with $\phi$ as a safety assertion to be checked\;
 Let $\afa{\run,\neg \phi}$ be the AFA constructed from $\run$ and $\neg \phi$ \label{finalafa}\;
 \uIf{$\mathcal{I} \lland \HMap(s_0)$ is satisfiable}{\label{finalcounter}
  $\run$ is a valid counterexample violating $\phi$\;
  \KwRet($\run$)\label{finalret1}\;
  }\Else{
  Let {$\afa{}'$} be the AFA modified by proposed transformations\;
  {${\mathtt{tmp}}:=\mathtt{tmp}\setminus Rev$}\label{langsub},  where $Rev=\{\rev(\sigma)\mid \sigma\in \mathcal{L}(\afa{}')\}$\;
  }
}
\KwRet($yes$)\label{finalret2}\; 
\caption{Algorithm to check the safety assertions of a concurrent program $P$}
\label{fig:afacombined}
\end{algorithm2e}}}
\vspace{-0.5cm}
\end{figure}
\subsection{Putting All Things Together For Safety Verification}
In Algorithm \ref{fig:afacombined}, all the above steps are combined to check if all the SC executions of a concurrent program $P$ satisfy the safety properties specified as assertions. 
Proof of the following theorem is given in Appendix \ref{ref:thmtermination}.
\begin{theorem}\label{lem:termination}
 Let $P=(p_1,\cdots,p_n)$ be a finite state program (with or without loops) with associated assertion maps $\AssnM_{p_i}$. All assertions of this program hold $\textrm{iff}$ Algorithm \ref{fig:afacombined} returns $yes$. If the algorithm returns a word $\run$ then at least one assertion fails in the execution of $\run$.
\end{theorem}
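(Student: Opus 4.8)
\textbf{Overview.} The plan is to establish three facts about Algorithm~\ref{fig:afacombined}: (i) it always terminates; (ii) whenever it returns a word $\run$ at line~\ref{finalret1}, the execution of $\run$ from $\mathcal{I}$ falsifies the assertion $\phi$ against which $\run$ was checked; and (iii) whenever it returns $yes$ at line~\ref{finalret2}, every assertion of $P$ holds. Facts (ii) and (iii) together give the stated equivalence: if all assertions of $P$ hold, then by (ii) line~\ref{finalret1} is never reached, so by (i) the algorithm halts at line~\ref{finalret2} with $yes$; the converse is exactly (iii); and (ii) is precisely the ``counterexample'' clause. So only (i), (ii), (iii) need to be proved, and (ii), (iii) reduce to the lemmas already established.

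\textbf{Soundness (facts (ii) and (iii)).} For (ii): when the algorithm returns $\run$, we have $\run\in\mathcal{L}(\htt{\A{P}})$ (since $\mathtt{tmp}$ only shrinks from this initial value), so $\run$ is a genuine SC execution that reaches a control location where $\phi$ is checked, and $\mathcal{I}\lland\HMap(s_0)$ is satisfiable in $\afa{\run,\neg\phi}$. By the corollary of Lemma~\ref{lem:runinafa} we have $\rev(\run)\in\acc(s_0)$, so Lemma~\ref{lem:hoare} gives $\HMap(s_0)\equiv\weakestpre{\subst{\run}{assume}{assert}}{\neg\phi}$; hence $\weakestpre{\subst{\run}{assume}{assert}}{\neg\phi}\lland\mathcal{I}$ is satisfiable, and Lemma~\ref{lem:assumeassrtlhs2} shows the execution of $\run$ terminates in a state violating $\phi$. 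For (iii) I would maintain the loop invariant ``every trace removed from $\mathtt{tmp}$ has been certified safe against the assertion it is checked against'', which holds vacuously at entry. In the $\mathbf{else}$ branch, $\afa{}'$ is obtained from $\afa{\run,\neg\phi}$ by the two transformations of Section~\ref{subsec:trans}, each of which preserves Lemma~\ref{lem:hoare} (Appendices~\ref{prf:modification1hoare} and~\ref{prf:modification2hoare}) and does not shrink the accepted language and keeps $\HMap(s_0)$ unsatisfiable; thus for every $\tau$ with $\rev(\tau)\in\mathcal{L}(\afa{}')$ we get $\HMap(s_0)\equiv\weakestpre{\subst{\tau}{assume}{assert}}{\neg\phi}$, which is unsatisfiable together with $\mathcal{I}$, so Lemma~\ref{lem:assumeassrtrhs2} makes $\tau$ safe. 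Since $Rev$ is exactly the set of such $\tau$, removing $Rev$ restores the invariant. At line~\ref{finalret2} $\mathtt{tmp}=\emptyset$, so every SC execution of $P$ that reaches an assertion configuration does so in a state satisfying the assertion there, i.e.\ all assertions of $P$ hold.

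\textbf{Termination (fact (i)), the main obstacle.} This is the delicate part, precisely because $P$ may contain loops, so $\mathcal{L}(\htt{\A{P}})$ is in general infinite and one cannot argue by counting traces; moreover each iteration removes an entire regular language $Rev$ rather than a single word. The argument I would use rests on two observations. First, since $P$ is finite state it has only finitely many reachable program states, hence (modulo the CNF normalization fixed in Definition~\ref{def:afa}) only finitely many formulas can arise as $\AssnMap(\cdot)$, and for each of the finitely many assertions appearing in $\overline{\AssnM}$ the automata $\afa{}'$ fall into only finitely many languages up to equivalence. The subtle point --- and this is where the real work lies --- is to show that the $\ResdMap(\cdot)$ annotations, although a priori unboundedly long, do \emph{not} make this family infinite: this follows because the ``largest stable suffix'' step in Definition~\ref{def:afa}(\ref{create}) collapses stable segments, so the transition structure over the fixed finite alphabet $\mathcal{OP}_{\epsilon}$, and hence $\mathcal{L}(\afa{}')$, stabilizes. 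Let $\mathcal{F}$ be the resulting finite family of possible $Rev$-languages. Second, put $\mathcal{C}_t=\{F\in\mathcal{F}\mid F\cap\mathtt{tmp}_t=\emptyset\}$, which is non-decreasing in $t$. Each iteration reaching the $\mathbf{else}$ branch picks $\run\in\mathtt{tmp}_t$ with $\run\in Rev_t$ (because $\rev(\run)\in\acc(s_0)$ survives the transformations), so $Rev_t\notin\mathcal{C}_t$ while $Rev_t\in\mathcal{C}_{t+1}$, hence $\mathcal{C}_t\subsetneq\mathcal{C}_{t+1}$. Therefore the $\mathbf{else}$ branch executes at most $|\mathcal{F}|$ times, after which the next iteration either finds $\mathtt{tmp}$ empty and returns $yes$ or necessarily returns a counterexample; in all cases the algorithm halts, which completes the proof.
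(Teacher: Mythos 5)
Your proposal is correct and follows essentially the same route as the paper's own proof: the counterexample clause via the corollary of Lemma~\ref{lem:runinafa}, Lemma~\ref{lem:hoare} and Lemma~\ref{lem:assumeassrtlhs2}; the $yes$ clause via preservation of Lemma~\ref{lem:hoare} under both transformations together with Lemma~\ref{lem:assumeassrtrhs2}; and termination from the finiteness of the possible AFAs for a finite-state program. Your termination step merely makes explicit (through the strictly growing collection of $Rev$-languages already disjoint from $\mathtt{tmp}$) the counting argument that the paper compresses into its one-line claim that only finitely many distinct AFAs can arise.
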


\section{Experimental Evaluation}\label{sec:experiments}
We implemented our approach in a prototype tool,
\texttt{ProofTraPar}. This tool reads the input program written in a
custom format. In future, we plan to use off-the-shelf parsers such as CIL or
LLVM to remove this dependency. Individual processes are represented
using finite state automata. 
We use an
automata library, libFAUDES \cite{libfaudes} to carry out operations
on automata. As this library does not provide operations on AFA,
mainly complementation and intersection, we implemented them in our
tool. After constructing the AFA from a trace we first remove
$\epsilon$ transitions from this AFA. This is followed by adding
additional edges in AFA using proposed
transformations. Instead of reversing this AFA (as in Line \ref{langsub}
of Algorithm \ref{fig:afacombined}) we subtract it with an NFA that
represents the reversed language of the set of all traces. This avoids
the need of reversing an AFA. Note that we do not convert our AFA to
NFA but rather carry out intersection and complementation operations (needed for language subtraction operation)
directly on AFA.
Our tool uses the Z3 \cite{z3} theorem prover to check the validity
of formulae during AFA construction. \texttt{ProofTraPar} can be
accessed from the repository
\url{https://github.com/chinuhub/ProofTraPar.git}.

Figure \ref{tab:res} tabulates the result of verifying
\textit{pthread-atomic} category of SV-COMP benchmarks using our tool,
THREADER \cite{Gupta:2011:TCV:2032305.2032337} and Lazy-CSeq
\cite{Lazy-CSeq}.
These tools were the winners in the concurrency category of the
software verification competition of 2013 (THREADER), 2014 and 2015
(Lazy-CSeq).
Dash (--) denotes that the tool did not finish the analysis within 15
minutes. Numbers in bold text denote the best time of that
experiment. Safe/Unsafe versions of these programs are labeled with
\textit{.safe}/\textit{.unsafe}. Except on Reader-Writer Lock and on
unsafe version of QRCU(Quick Read Copy Update), our tool performed
better than the other two tools. On unsafe versions, our approach took
more time to find out an erroneous trace as compared to Lazy-CSeq
\cite{Lazy-CSeq}. Context-bounded exploration by Lazy-CSeq
\cite{Lazy-CSeq} and the presence of bugs at a shallow depth seem to be a
possible reason behind this performance difference. Introducing priorities while picking traces in order to 
make our approach efficient in bug-finding is left open for future work.
\begin{figure}
 \scalebox{0.75}{\parbox{\linewidth}{%
 \hspace{-0.5cm}
\begin{tabular}{l|c|c|c}
Program & ProofTraPar & THREADER\cite{Gupta:2011:TCV:2032305.2032337} & Lazy-CSeq\cite{Lazy-CSeq}\\
\hline& & &   \\
Peterson.safe & \textbf{0.3} & 3.2 & 3.1\\ 
Dekker.safe & \textbf{1.1} & 1.7 & 4.2 \\
Lamport.safe & \textbf{2.4} & 47 & 5.1 \\
Szymanksi.safe & \textbf{3} & 12.8 & 4\\
TimeVarMutex.safe & \textbf{0.76} & 8.56 & 4.2 \\
RWLock.safe (2R+2W) & 8.8 & 140 & \textbf{6.7}\\
RWLock.unsafe (2R+2W)  & 3.8 & 153 & \textbf{0.7}\\
Qrcu.safe (2R+1W) & \textbf{20} & -- & 41 \\
Qrcu.unsafe (2R+1W) & 13.8 & 76 & \textbf{1.1}\\
\end{tabular}
}}
\caption{\scriptsize{Comparison with THREADER\cite{Gupta:2011:TCV:2032305.2032337}, and Lazy-CSeq \cite{Lazy-CSeq}} (Time in seconds)}
\label{tab:res}
\vspace{-0.5cm}
\end{figure}
\section{Related Work}\label{sec:related}
Verifying the safety properties of a concurrent program is a well studied area. 
Automated verification tools which use model checking based approaches employ optimizations such as Partial Order Reductions (POR) \cite{Peled:1993:OOM:647762.735490,Godefroid:1996:PMV:547238,Flanagan:2005:DPR:1040305.1040315} to handle larger number of interleavings. These optimizations also selectively check a representative set of traces among the set of all interleavings. POR based methods were traditionally used in bug finding but recently they have been extended efficiently, using abstraction and interpolants, for proving programs correct \cite{wko2013}. The technique presented in this paper, using AFA, can possibly be used to keep track of partial orders in POR based methods.
In \cite{ctp}, a formalism called concurrent trace program (CTP) is defined to capture a set of interleavings corresponding to a concurrent trace. CTP captures the partial orders encoded in that trace. Corresponding to a CTP, a formula $\phi_{ctp}$ is defined such that $\phi_{ctp}$ is satisfiable iff there is a feasible linearization of the partial orders encoded in CTP that violates the given property. Our AFA is also constructed from a trace but unlike CTP it only captures those different interleavings which guarantee the same proof outline.
Recently in \cite{Gupta:2015:SRC:2775051.2677008}, a formalism called \emph{HB-formula} has been proposed 
to capture the set of happens-before relations in a set of executions. 
This relation is then used for multiple tasks such as synchronization synthesis\cite{DBLP:conf/cav/CernyCHRRST15}, bug summarization and predicate refinement. 
Since the AFA constructed by our algorithm can also be represented as a boolean formula (universal states correspond to conjunction and existential states correspond to disjunction) that encodes the ordering relations among the participating events, it will be interesting to explore other usages of this AFA along the lines of \cite{Gupta:2015:SRC:2775051.2677008}.
%


\section{Conclusion and Future Work}\label{sec:conclude}
We presented a trace partitioning based approach for verifying
safety properties of a concurrent program.  To this end, we 
introduced a novel construction of an alternating finite automaton to
capture the proof of correctness of a trace in a program.  We also presented an implementation of our 
algorithm which compared competitively with existing state-of-the-art tools. 
We plan to extend this approach for
parameterized programs and programs under relaxed memory models. 
We also plan to investigate the use of
interpolants with weakest precondition axioms to incorporate
abstraction for handling infinite state programs.

\bibliography{paper}

\begin{thebibliography}{10}

\bibitem{DBLP:journals/tcs/BrzozowskiL80}
J.A. Brzozowski and L.~L. Ernst.
\newblock On equations for regular languages, finite automata, and sequential
  networks.
\newblock {\em TCS}, 10:19--35, 1980.

\bibitem{DBLP:conf/cav/CernyCHRRST15}
P.~Cern{\'{y}}, E.M. Clarke, T.~A. Henzinger, A.~Radhakrishna, L.~Ryzhyk,
  R.~Samanta, and T.~Tarrach.
\newblock From non-preemptive to preemptive scheduling using synchronization
  synthesis.
\newblock In {\em CAV}, 2015.

\bibitem{Chandra:1981:ALT:322234.322243}
A.~K. Chandra, D.~C. Kozen, and L.~J. Stockmeyer.
\newblock Alternation.
\newblock {\em J. ACM}, 28(1):114--133, January 1981.

\bibitem{z3}
L.~De~Moura and N.~Bj{\o}rner.
\newblock Z3: An efficient {SMT} solver.
\newblock In {\em TACAS}, pages 337--340. Springer-Verlag, 2008.

\bibitem{libfaudes}
Bernd~Opitz et~al.
\newblock libfaudes-discrete event system library.

\bibitem{Farzan:2013:IDF:2429069.2429086}
A.~Farzan, Z.~Kincaid, and A.~Podelski.
\newblock Inductive data flow graphs.
\newblock In {\em POPL}, pages 129--142, 2013.

\bibitem{Flanagan:2005:DPR:1040305.1040315}
C.~Flanagan and P.~Godefroid.
\newblock Dynamic partial-order reduction for model checking software.
\newblock In {\em POPL}, pages 110--121, 2005.

\bibitem{Godefroid:1996:PMV:547238}
P.~Godefroid.
\newblock {\em Partial-Order Methods for the Verification of Concurrent
  Systems: An Approach to the State-Explosion Problem}.
\newblock Springer, 1996.

\bibitem{Gupta:2015:SRC:2775051.2677008}
A.~Gupta, T.~A. Henzinger, A.~Radhakrishna, R.~Samanta, and T.~Tarrach.
\newblock Succinct representation of concurrent trace sets.
\newblock In {\em POPL}, 2015.

\bibitem{Gupta:2011:TCV:2032305.2032337}
A.~Gupta, C.~Popeea, and A.~Rybalchenko.
\newblock Threader: A constraint-based verifier for multi-threaded programs.
\newblock In {\em CAV}, pages 412--417, 2011.

\bibitem{Lazy-CSeq}
O.~Inverso, E.~Tomasco, B.~Fischer, S.~La~Torre, and G.~Parlato.
\newblock Bounded model checking of multi-threaded {C} programs via lazy
  sequentialization.
\newblock In {\em CAV}, volume 8559 of {\em LNCS}, pages 585--602. Springer,
  2014.

\bibitem{Lamport:1979:MMC:1311099.1311750}
L.~Lamport.
\newblock How to make a multiprocessor computer that correctly executes
  multiprocess programs.
\newblock {\em IEEE Trans. Comput.}, September 1979.

\bibitem{Peled:1993:OOM:647762.735490}
D.~Peled.
\newblock All from one, one for all: On model checking using representatives.
\newblock In {\em CAV}, pages 409--423. Springer-Verlag, 1993.

\bibitem{wko2013}
B.~Wachter, D.~Kroening, and J.~Ouaknine.
\newblock Verifying multi-threaded software with {Impact}.
\newblock In {\em FMCAD}, pages 210--217. IEEE, 2013.

\bibitem{ctp}
C.~Wang, S.~Kundu, M.~Ganai, and A.~Gupta.
\newblock Symbolic predictive analysis for concurrent programs.
\newblock In Ana Cavalcanti and DennisR. Dams, editors, {\em FM 2009: Formal
  Methods}, volume 5850 of {\em LNCS}, pages 256--272. Springer Berlin
  Heidelberg, 2009.

\end{thebibliography}
\bibliographystyle{plain}
\newpage
\appendix
\pagenumbering{arabic}
\section{Proofs of the Paper}

\subsection{Proof of Lemma \ref{lem:assumeassrtrhs2}}\label{prf:lemassumeassrtrhs2}
We prove it by induction on $n$.
\begin{enumerate}
 \item Base case $\len{\sigma}=0$: If $\len{\sigma}=0$ then $\weakestpre{\subst{\run}{assume}{assert}}{\neg \phi} =\neg \phi$. If $\neg \phi \lland \mathcal{I}$ is unsatisfiable then $\mathcal{I}$ satisfies $\phi$. Hence proved.
  \item Induction step, $\len{\sigma}=n+1$: Let $\sigma=\sigma'.a$. If $\weakestpre{\subst{\run'.a}{assume}{assert}}{\neg \phi} \lland \mathcal{I}$ is unsatisfiable then following cases can happen based on $a$.
    \begin{itemize}
     \item $a: x:=E$:- If $\weakestpre{\subst{\run'.a}{assume}{assert}}{\neg \phi} \lland \mathcal{I}$ is unsatisfiable then $\weakestpre{\subst{\run'}{assume}{assert}}{\weakestpre{a}{\neg \phi}} \lland \mathcal{I}$ is also unsatisfiable. By substituting $\weakestpre{a}{\neg \phi}$ with $\subst{\neg \phi}{E}{x}$ we get that $\weakestpre{\subst{\run'}{assume}{assert}}{\subst{\neg \phi}{E}{x}} \lland \mathcal{I}$  is unsatisfiable. Using IH on $\sigma'$ it implies that after executing $\sigma'$ from $\mathcal{I}$ the resultant state either does not terminate or terminates in a state satisfying $\subst{\phi}{E}{x}$. If $\sigma'$ does not terminate then so does the execuction of $\sigma$ starting from $\mathcal{I}$. If $\sigma'$ terminates in a state satisfying $\subst{\phi}{E}{x}$ then by the definition of the weakest precondition, execution of $a$ from this state will satisfy $\phi$. Hence proved.
     \item $a: \assume{\phi'}$:-If $\weakestpre{\subst{\run'.a}{assume}{assert}}{\neg \phi} \lland \mathcal{I}$ is unsatisfiable then $\weakestpre{\subst{\run'}{assume}{assert}}{\weakestpre{a}{\neg \phi}} \lland \mathcal{I}$ is also unsatisfiable. By substituting $\weakestpre{a}{\neg \phi}$ with $\phi' \lland \neg \phi$ we get that $\weakestpre{\subst{\run'}{assume}{assert}}{\phi' \lland \neg \phi} \lland \mathcal{I}$  is unsatisfiable. Using IH on $\sigma'$ it implies that after executing $\sigma'$ from $\mathcal{I}$ the resultant state either does not terminate or terminates in a state satisfying $\neg \phi \lor \phi'$. If $\sigma'$ does not terminate then the execution of $\sigma$ from $\mathcal{I}$ does not terminate as well. If $\sigma'$ terminates in a state satisfying $\neg \phi$ then the execution of $a$ blocks and hence the execution of $\sigma$ does not terminate. If $\sigma'$ terminates in a state satisfying $\phi'$ but $\neg \phi$ does not hold then $\phi \land \phi'$ must hold. Execution of $\assume{\phi'}$  acts as $\nop$ instruction and the resultant state satisfies $\phi$. hence proved.
     \item $a: \lcas{x}{v_1}{v_2}$:- As weakest precondition of $\lcas{x}{v_1}{v_2}$ is obtained from the weakest precondition of assignment and assume instruction hence the similar reasoning works for this case.
    \end{itemize}

\end{enumerate}

\subsection{Proof of Lemma \ref{lem:assumeassrtlhs2}}\label{prf:lemassumeassrtlhs2}

\begin{proof}
 Let us prove it by induction on the length of $\run$.\\
 \begin{enumerate}
  \item Base case, $\len{\run}=0$: When the length of $\run$ is 0 and $\mathcal{I} \lland \neg \phi$ is satisfiable then $\mathcal{I}$ does not satisfy $\phi$. Hence proved.
  \item Induction Step, $\len{\run}=n+1$: Let $\sigma=\sigma'.a$. Following case can happen based on the type of $a$.
  \begin{itemize}
   \item $a: x:=E$:- If $\weakestpre{\subst{\run}{assume}{assert}}{\neg \phi} \lland \mathcal{I}$ is satisfiable then $\weakestpre{\subst{\run'}{assume}{assert}}{\weakestpre{a}{\neg \phi}} \lland \mathcal{I}$ is also satisfiable. By substituting $\weakestpre{a}{\neg \phi}=\subst{\neg \phi}{E}{x}$ we get that $\weakestpre{\subst{\run'}{assume}{assert}}{\subst{\neg \phi}{E}{x}} \lland \mathcal{I}$ is satisfiable. By IH on $\run'$, execution of $\sigma'$ from $\mathcal{I}$ terminates in a state not satisfying $\subst{\phi}{E}{x}$. By definition of the weakest precondition, the state reached after executing $a$ from this state does not satisfy $\phi$. Hence proved.
   \item $a: \assume{\phi'}$:-If $\weakestpre{\subst{\run}{assume}{assert}}{\neg \phi} \lland \mathcal{I}$ is satisfiable then $\weakestpre{\subst{\run'}{assume}{assert}}{\weakestpre{\subst{\assume{\phi'}}{assume}{assert}}{\neg \phi}} \lland \mathcal{I}$ is also satisfiable. By substituting $\weakestpre{\subst{\assume{\phi'}}{assume}{assert}}{\neg \phi}=\phi'\land \neg \phi$ we get that $\weakestpre{\subst{\run'}{assume}{assert}}{\neg (\neg \phi' \llor \phi)} \lland \mathcal{I}$ is satisfiable. By IH on $\run'$, execution of $\sigma'$ from $\mathcal{I}$ terminates in a state not satisfying $\neg \phi' \llor \phi$. In other words, $\phi'$ and $\neg \phi$ holds in the state reached after executing $\sigma'$ from $\mathcal{I}$. Therefore, after executing $\assume{\phi'}$, the resultant state satisfies $\neg \phi$ and hence proved.
   \item $a: \lcas{x}{v_1}{v_2}$:-Similar to the combination of above two cases.
  \end{itemize}
 \end{enumerate}

\end{proof}

\subsection{Proof of Lemma \ref{lem:runinafa}}\label{prf:lemruninafa}
\begin{proof}
We use induction for this proof. Let us use the following ordering on the states of $\afa{\run}{\phi}$. For any two states $s$ and $s'$, $s<s'$  if $\len{\ResdMap(s)}<\len{\ResdMap(s')}$ or if lengths are same then $\AssnMap(s)$ is a sub formula of $\AssnMap(s')$. Any two states which are not related by this order, put them in any order to make $<$ as a total order. It is clear that the smallest state in this total order must be one of the accepting state. Now we are ready to proceed by induction using this total order.

\begin{itemize}
 \item Base case; For every accepting state $s\in S_F$, by Point \ref{fin} of Definition \ref{def:afa}, the condition $\weakestpre{op}{\AssnMap(s)}=\AssnMap(s)$ holds for every $op\in \Elem{\ResdMap(s)}$. Further, By transition rule \ref{tag:lit-self-assn} of this AFA, a self transition must be there for all such $op\in \Elem{\ResdMap(s)}$ and hence the condition $\rev(\ResdMap(s))\in \acc(s)$ holds (because these transitions can be taken in any order to construct the required word).
 \item Induction step; Following possibilities exist for the state $s$,
 \begin{itemize}
  \item $s$ is a universal state; By construction, there should be states $s_1,\cdots,s_k$ such that $(s,\epsilon,\{s_1,\cdots,s_k\})$ is a transition. By our induction ordering, $s_1,\cdots,s_k$ are smaller than $s$ and hence we apply IH on them to get that $\rev(\ResdMap(s_i))\in \acc(s_i)$ for $i\in\{1\cdots k\}$. However, by the transition rule \ref{tag:compound-assn}, $\ResdMap(s)=\ResdMap(s_1)=\cdots=\ResdMap(s_k)$ and hence $\rev(\ResdMap(s)\in \acc(s_i)$ for $i\in \{1\cdots k\}$. By the definition of $acc(s)$ for a universal state, $\acc(s)$ is intersection of the sets $\acc(s_i)$ for $i\in \{1\cdots k\}$ and hence we get the required result, viz. $\rev(\ResdMap(s))\in \acc(s)$.
  \item $s$ is an existential state; If $s$ is an accepting state then Base case holds here. Consider the case when $s$ is not an accepting state. It should have a successor state $s'$ such that $(s,op,\{s'\})$ is a transition. By transition rule \ref{tag:lit-assn} $\ResdMap(s)=\ResdMap(s').op.\sigma''$ such that $\weakestpre{\subst{\sigma''}{assume}{assert}}{\AssnMap(s)}=\AssnMap(s)$. By transition rule \ref{tag:lit-self-assn}, $s$ will have self loop transitions on all symbols in $\sigma''$(*). Applying IH on $s'$ gives that $\rev(\ResdMap(s'))\in \acc(s')$(\#). Because of the transition $(s,op,\{s'\})$, $op.\acc(s')\subseteq \acc(s)$. This along with (\#) gives us $op.\rev(\ResdMap(s')) \in \acc(s)$(**). Rearranging this and using (*) we get $\rev(\ResdMap(s').op.\sigma'')\in \acc(s)$ or equivalently $\rev(\ResdMap(s))\in \acc(s)$. Hence proved.
 \end{itemize}
\end{itemize}
\end{proof}

\subsection{Proof of Lemma \ref{lem:hoare}}\label{prf:lemhoare}

\begin{proof}
We use induction for this proof. Same as in the previous proof, let us use the following ordering on the states of $\afa{}{}$. For any two states $s$ and $s'$, $s<s'$  if $\len{\ResdMap(s)}<\len{\ResdMap(s')}$ or if lengths are same then $\AssnMap(s)$ is a sub formula of $\AssnMap(s')$. Any two states which are not related by this order, put them in any order to make $<$ as a total order. It is clear that the smallest state in this total order must be one of the accepting state. Now we are ready to proceed by induction using this total order.
\begin{itemize}
 \item Base case, By definition of the accepting state in AFA construction, Point \ref{fin} of Definition \ref{def:afa}, and the self loop transition rule, Rule \ref{tag:lit-self-assn}, we know that for every word $\sigma'\in \acc(s)$, $\weakestpre{\subst{\sigma'}{assume}{assert}}{\AssnMap(s)}=\AssnMap(s)$. Rule \ref{tag:base} of Figure \ref{fig:hmap} sets $\HMap(s)$ same as $\AssnMap(s)$ for such states hence the statement of this lemma follows for the accepting states.
 \item Induction step; we pick a state $s$ such that one of the following holds,
 \begin{enumerate}
 \item \label{prf:univ} $s$ is a universal state;By construction, there should be states $s_1,\cdots,s_k$ such that $(s,\epsilon,\{s_1,\cdots,s_k\})$ is a transition. Let $\sigma$ be a word accepted by $s$ then by the definition of accepting set of words of a universal states, $\sigma$ must be accepted by each of $s_1,\cdots s_k$. By our induction ordering, $s_1,\cdots,s_k$ are smaller than $s$ and hence we apply IH on them to get that $\weakestpre{\subst{\rev(\sigma)}{assume}{assert}}{\AssnMap(s_i)}=\HMap(s_i)$ for $i\in \{1\cdots k\}$. Two cases arise based on whether 
 \begin{itemize}
  \item $\AssnMap(s)$ is a conjunction of $\AssnMap(s_i)$ for $i\in \{1\cdots k\}$; Following Rule \ref{tag:conj} we set $\HMap(s)=\bigwedge_i \HMap(s_i)$ and $\weakestpre{\subst{\rev(\sigma)}{assume}{assert}}{\AssnMap(s)}=\HMap(s)$ then follows from the Property \ref{prop:wp}, using conjunction, of the weakest precondition. 
  \item $\AssnMap(s)$ is a disjunction of $\AssnMap(s_i)$ for $i\in \{1\cdots k\}$; Following Rule \ref{tag:conj} we set $\HMap(s)=\bigwedge_i \HMap(s_i)$ and $\weakestpre{\subst{\rev(\sigma)}{assume}{assert}}{\AssnMap(s)}=\HMap(s)$ then follows from the Property \ref{prop:wp}, using disjunction, of the weakest precondition. 
 \end{itemize}
  \item \label{prf:exist} $s$ is an existential state; If $s$ is an accepting state then the same argument as used in the Base case holds. If $s$ is not an accepting state then the only outgoing transition from $s$ is of the form $(s,op,\{s'\})$, By rule \ref{tag:lit-assn}(*). Now consider a word $\sigma\in \acc(s)$. $\sigma$ must be of the form $\sigma''.op.\sigma'$ where $\weakestpre{\subst{\sigma''}{assume}{assert}}{\AssnMap(s)}=\AssnMap(s)$(*) (because of the self transitions constructed from Rule \ref{tag:lit-self-assn}) and $\sigma'\in \acc(s')$. Therefore, $\weakestpre{\subst{\rev(\sigma)}{assume}{assert}}{\AssnMap(s)}$ =\\
  =$\weakestpre{\subst{\rev(\sigma''.op.\sigma')}{assume}{assert}}{\AssnMap(s)}$\\
  =$\weakestpre{\subst{\rev(\sigma').op.\rev(\sigma'')}{assume}{assert}}{\AssnMap(s)}$\\
  =$\weakestpre{\subst{\rev(\sigma').op}{assume}{assert}}{\AssnMap(s)}$ (using (*))\\
  =$\weakestpre{\subst{\rev(\sigma')}{assume}{assert}}{\weakestpre{\subst{op}{assume}{assert}}{\AssnMap(s)}}$ (using weakest precondition definition)\\
  =$\weakestpre{\subst{\rev(\sigma')}{assume}{assert}}{\AssnMap(s')}$ (using Transition rule \ref{tag:lit-assn})\\
  As $\sigma'\in \acc(s')$ this is same as $\HMap(s')$ by applying IH on $s'$. As $\HMap(s)$ is same as $\HMap(s')$, as done in Rule \ref{tag:prop}, we prove this case as well.
 \end{enumerate}
 \end{itemize}
\end{proof}

\subsection{Proof of Correctness of Transformation-I}\label{prf:modification1hoare}

\begin{lemma}\label{lem:modification1hoare}
 Let $\afa{}{}$ be an automaton constructed from a trace  and a post condition as defined in Definition \ref{def:afa} and further modified by Algorithm \ref{fig:afamodification} then for every state $s$ of this AFA and for every word $\sigma$ accepted by state $s$, $\HMap(s)$ is logically equivalent to $\weakestpre{\subst{\rev(\sigma)}{assume}{assert}}{\AssnMap(s)}$.
\end{lemma}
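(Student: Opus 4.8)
The plan is to re-run the induction from the proof of Lemma~\ref{lem:hoare} on the \emph{modified} automaton and check that only two cases are genuinely new. I keep the same well-ordering on states: $s' < s$ iff $\len{\ResdMap(s')} < \len{\ResdMap(s)}$, or the lengths are equal and $\AssnMap(s')$ is a proper subformula of $\AssnMap(s)$, extended arbitrarily to a total order on the finite state set. Algorithm~\ref{fig:afamodification} does not assign $\ResdMap$ to the fresh universal states $s_u$ created in Line~\ref{mod3}; I adopt the natural convention $\ResdMap(s_u) := \ResdMap(s)$, which is consistent since $\ResdMap(s_1)=\cdots=\ResdMap(s_k)=\ResdMap(s)$ by Rule~\ref{tag:compound-assn} and each $s'_i$ in a core lies in $\{s_1,\dots,s_k\}$. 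Modulo the trivial degenerate cases (a singleton core, or the whole successor set being the unique minimal core, both of which merely re-express $s$ or $s_u$ by a state with identical $\HMap$, $\AssnMap$, and accepted language), the new $\epsilon$-edges $s\to s_u$ and $s_u\to s'_i$ strictly decrease $\AssnMap$-size at unchanged $\ResdMap$, so the ordering is still a legal induction order. For every state untouched by the algorithm, the base case and the inductive step are \emph{verbatim} those in the proof of Lemma~\ref{lem:hoare}, because the outgoing transitions and the $\HMap$/$\AssnMap$ labels those arguments use are unchanged; in particular the original states $s_1,\dots,s_k$ still satisfy the lemma with their (unchanged) accepted languages. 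So it remains to verify the lemma at the new universal states $s_u$ and at the converted state $s$.

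\emph{New universal states.} Let $s_u \in \mathtt{U}$ come from a minimal unsat core $\{s'_1,\dots,s'_n\} \in \unsatcore(s)$, so that $\AssnMap(s_u) = \bigwedge_i \AssnMap(s'_i)$, $\HMap(s_u) = \bigwedge_i \HMap(s'_i)$ (Lines~\ref{mod4}--\ref{mod5}) and $\delta(s_u,\epsilon) = \{s'_1,\dots,s'_n\}$. As $s_u$ is universal, any $\sigma \in \acc(s_u)$ is accepted by every $s'_i$; each $s'_i$ is an original state with $s'_i < s_u$, so the induction hypothesis gives $\weakestpre{\subst{\rev(\sigma)}{assume}{assert}}{\AssnMap(s'_i)} = \HMap(s'_i)$. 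Applying the conjunction clause of Property~\ref{prop:wp},
\[
\weakestpre{\subst{\rev(\sigma)}{assume}{assert}}{\AssnMap(s_u)}
 = \bigwedge_i \weakestpre{\subst{\rev(\sigma)}{assume}{assert}}{\AssnMap(s'_i)}
 = \bigwedge_i \HMap(s'_i) = \HMap(s_u).
\]

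\emph{The converted state $s$.} After the algorithm, $s$ is existential with $\delta(s,\epsilon) = \mathtt{U}$ and no other edge, and $\HMap(s)$ is unsatisfiable by the algorithm's precondition; also each $\HMap(s_u) = \bigwedge_i \HMap(s'_i)$ is unsatisfiable because $\{s'_1,\dots,s'_n\}$ is an unsat core. Pick $\sigma \in \acc(s)$; since $s$ is existential, $\sigma \in \acc(s'')$ for some $s'' \in \mathtt{U}$, and the previous case yields $\weakestpre{\subst{\rev(\sigma)}{assume}{assert}}{\AssnMap(s'')} = \HMap(s'')$, which is unsatisfiable. Now $\AssnMap(s) = \bigwedge_k \AssnMap(s_k)$ logically implies $\AssnMap(s'')$ — a conjunction over a subset of those conjuncts — so by the monotonicity Property~\ref{prop:wpsubset}, $\weakestpre{\subst{\rev(\sigma)}{assume}{assert}}{\AssnMap(s)}$ logically implies $\weakestpre{\subst{\rev(\sigma)}{assume}{assert}}{\AssnMap(s'')}$, which is unsatisfiable. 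Hence $\weakestpre{\subst{\rev(\sigma)}{assume}{assert}}{\AssnMap(s)}$ is unsatisfiable, i.e.\ logically equivalent to the (also unsatisfiable) $\HMap(s)$, which finishes the case and the lemma.

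I expect the two displayed steps to be immediate from Properties~\ref{prop:wp} and~\ref{prop:wpsubset}; the real care goes into the bookkeeping in the first paragraph. Enlarging $\acc(s)$ also enlarges $\acc(t)$ for every $t$ that reaches $s$, so one must be sure the induction for those $t$ still closes. It does, precisely because every transition of the original AFA either strictly shortens $\ResdMap$ (Rule~\ref{tag:lit-assn}), strictly shrinks the formula at equal $\ResdMap$ (Rule~\ref{tag:compound-assn}), or is a self-loop (Rule~\ref{tag:lit-self-assn}), so the only non-self-loop cycles are absent; consequently every $t$ reaching $s$ is strictly above $s$ in the order, and by the time $t$ is processed the updated statement for $s$ is already available, so the original case analysis at $t$ applies unchanged.
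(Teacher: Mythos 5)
Your proof is correct and takes essentially the same route as the paper's: it re-runs the induction of Lemma~\ref{lem:hoare}, handles the fresh universal states via the conjunction clause of Property~\ref{prop:wp}, and handles each converted existential state by picking one accepting successor $s''\in\mathtt{U}$, using the implication $\AssnMap(s)\limp\AssnMap(s'')$ together with the monotonicity Property~\ref{prop:wpsubset} and the unsatisfiability of $\HMap(s)$ and $\HMap(s'')$. The additional bookkeeping you supply (the $\ResdMap$ convention for the new states, the check that the well-order and the induction at predecessor states still close) only makes explicit what the paper leaves implicit.
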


\begin{proof}
Proof of this lemma is very similar to the proof of Lemma \ref{lem:hoare} given in Appendix \label{prf:lemhoare}. Here we only highlight the changes in the proof. Note that this transformation converts some universal states to existential states. Let $s$ be one such state that was converted from universal to existential state. Let $(s,\epsilon,\{s_1,\cdots,s_k\})$ was the original transition in the AFA which got modified to $(s,\epsilon,\{s_{u_1},\cdots,s_{u_n}\}$ where $s_{u_i}$ are newly created universal states in Line \ref{mod3} of Algorithm \ref{fig:afamodification}. By construction, $\HMap(s_{u_i})$ is unsatisfiable for each of these $s_{u_1},\cdots,s_{u_n}$(*). Let $\sigma$ be a word accepted by $s$ after converting it to existential state. By acceptance conditions, $\sigma$ must be accepted by at least one state, say $s_{u_m}$ in the set $\{s_{u_1},\cdots,s_{u_n}\}$. By IH on $s_{u_m}$ we get $\weakestpre{\subst{\sigma}{assume}{assert}}{\AssnMap(s_{u_m})}=\HMap(s_{u_m})$(**). Further, by construction $\AssnMap(s)$ implies $\AssnMap(s_{u_m})$. This fact, along with the monotonicity property of the weakest precondition, Property \ref{prop:wpsubset}, we get that $\weakestpre{\subst{\sigma}{assume}{assert}}{\AssnMap(s)}$ is unsatisfiable and hence same as $\HMap(s)$.
\end{proof}

\subsection{Proof of Correctness of Transformation-II}\label{prf:modification2hoare}
\begin{lemma}\label{lem:modification2hoare}
 Let $\afa{}{}$ be an automaton constructed from a trace and a post condition as defined in Definition \ref{def:afa} and further modified by adding edges as discussed above then for every state $s$ of this AFA and for every word $\sigma$ accepted by state $s$, $\HMap(s)$ is logically equivalent to $\weakestpre{\subst{\rev(\sigma)}{assume}{assert}}{\AssnMap(s)}$.
\end{lemma}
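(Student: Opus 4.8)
The plan is to follow the same skeleton as the proofs of Lemma~\ref{lem:hoare} and Lemma~\ref{lem:modification1hoare}. Those lemmas already establish the claim for every state of the AFA \emph{before} the transformation of Figure~\ref{fig:addedge} is applied, so the whole task is to check that the equivalence $\HMap(s)\equiv\weakestpre{\subst{\rev(\sigma)}{assume}{assert}}{\AssnMap(s)}$ survives for every word $\sigma$ that the enlarged automaton newly accepts. The only extra tool needed is monotonicity of the weakest precondition, Property~\ref{prop:wpsubset}, used together with the side conditions of rules~\ref{tag:unsatrule} and~\ref{tag:validrule}: a new edge $(s,op,\{s'\})$ is created only when $\HMap(s)$ and $\HMap(s')$ are \emph{simultaneously} unsatisfiable (Rule~\ref{tag:unsatrule}) or \emph{simultaneously} valid (Rule~\ref{tag:validrule}), and those are precisely the situations where a one-sided implication can be promoted to a logical equivalence.

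First I would re-organise the induction. The proof of Lemma~\ref{lem:hoare} inducts on a well-founded order on states (decreasing $\len{\ResdMap(\cdot)}$, then decreasing formula size). That order is broken here, since a new edge --- in particular a new $\epsilon$-edge or a new self-loop produced by~\ref{tag:unsatrule} or~\ref{tag:validrule} --- need not decrease $\len{\ResdMap(\cdot)}$ and may even close cycles. I would therefore induct instead on the (finite) derivation that witnesses $\sigma\in\acc(s)$; this measure is well founded for any accepted word and every subderivation is strictly smaller, which is all that is needed. Under it the base case ($\sigma=\epsilon$, $s\in S_F$) is unchanged, and every inductive case whose first non-self-loop step uses an \emph{original} transition is word-for-word the corresponding case of Lemma~\ref{lem:hoare}/Lemma~\ref{lem:modification1hoare}. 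One small point must be recorded: the self-loops that~\ref{tag:unsatrule}/\ref{tag:validrule} may attach to an accepting state are harmless, because at an accepting state $\AssnMap(s)=\HMap(s)$ is either $\false$ or $\true$ and $\weakestpre{\subst{op}{assume}{assert}}{\false}\equiv\false$, $\weakestpre{\subst{op}{assume}{assert}}{\true}\equiv\true$.

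The genuinely new case is when $s$ is existential and the derivation of $\sigma\in\acc(s)$ uses a \emph{newly added} transition $(s,op,\{s'\})$ as its first non-self-loop step; then $\sigma=\sigma''\,op\,\sigma'$ with $\sigma''$ built from self-loops of $s$ and $\sigma'\in\acc(s')$ through a strictly smaller derivation. Since rules~\ref{tag:unsatrule} and~\ref{tag:validrule} require $\HMap(s)$ to be, respectively, unsatisfiable and valid, the state $s$ cannot carry self-loops of both kinds simultaneously, so every self-loop letter of $\sigma''$ moves $\weakestpre{\cdot}{\AssnMap(s)}$ in the same direction; a routine induction with Property~\ref{prop:wpsubset} then yields $\weakestpre{\subst{\rev(\sigma'')}{assume}{assert}}{\AssnMap(s)}\limp\AssnMap(s)$ when the new edge comes from~\ref{tag:unsatrule}, and $\AssnMap(s)\limp\weakestpre{\subst{\rev(\sigma'')}{assume}{assert}}{\AssnMap(s)}$ when it comes from~\ref{tag:validrule}. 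In the~\ref{tag:unsatrule} case, combining this with $\weakestpre{\subst{op}{assume}{assert}}{\AssnMap(s)}\limp\AssnMap(s')$ and two further applications of Property~\ref{prop:wpsubset} gives that $\weakestpre{\subst{\rev(\sigma)}{assume}{assert}}{\AssnMap(s)}$ logically implies $\weakestpre{\subst{\rev(\sigma')}{assume}{assert}}{\AssnMap(s')}$, which by the induction hypothesis applied to $s'$ and $\sigma'$ is logically equivalent to $\HMap(s')$ and hence unsatisfiable; therefore $\weakestpre{\subst{\rev(\sigma)}{assume}{assert}}{\AssnMap(s)}$ is unsatisfiable, that is, logically equivalent to the (unsatisfiable) formula $\HMap(s)$. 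The~\ref{tag:validrule} case is dual: one uses $\AssnMap(s')\limp\weakestpre{\subst{op}{assume}{assert}}{\AssnMap(s)}$, reverses all implications, and substitutes validity of $\HMap(s'),\HMap(s)$ for unsatisfiability, concluding that $\weakestpre{\subst{\rev(\sigma)}{assume}{assert}}{\AssnMap(s)}$ is valid and therefore equivalent to $\HMap(s)$.

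The main obstacle, such as it is, is structural rather than computational: getting the induction well founded once the new edges have destroyed the $\ResdMap$-length order, and keeping the self-loop prefix $\sigma''$ under control --- which is exactly what the unsatisfiable/valid dichotomy of rules~\ref{tag:unsatrule},~\ref{tag:validrule} secures. Everything else is monotonicity of $\weakestpre{\cdot}{\cdot}$ and reuse of Lemma~\ref{lem:hoare}. Finally, since Transformation~I (Lemma~\ref{lem:modification1hoare}) also preserves the characterisation, the very same argument goes through when the edge-adding transformation is applied on top of the automaton already modified by Algorithm~\ref{fig:afamodification}, which is the order in which both transformations are used in Algorithm~\ref{fig:afacombined}.
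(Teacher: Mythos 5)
Your proposal is correct and follows essentially the same route as the paper's own proof: abandon the $\ResdMap$-based state ordering, induct on the acceptance of $\sigma$, reuse the Lemma~\ref{lem:hoare} argument for original transitions and universal states, and handle the new edges by combining the induction hypothesis with Property~\ref{prop:wpsubset} and the unsatisfiable/valid side conditions of Rules~\ref{tag:unsatrule} and~\ref{tag:validrule}. If anything, your version is slightly more careful than the paper's (inducting on the acceptance derivation rather than word length, which also covers newly added $\epsilon$-edges, and explicitly controlling the self-loop prefix $\sigma''$ where the paper simply asserts stability), but the substance of the argument is the same.
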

\begin{proof}
As a result of adding edges in this transformation, we can not use the ordering among states as done for earlier proofs. This is because, now a transition $(s,op,S)$ does not guarantee that the states in the set $S$ are smaller then $s$ and hence it will not be possible to apply IH directly. Therefore in this proof we apply induction on the length of $\sigma'$ accepted by some state $s$.
\begin{itemize}
   \item Induction step; Let $s\in \afa{}{}$ and $\run \in \acc(s)$ such that $\len{\run}=m+1$. Either $s\in \teS$ or $s\in \faS$. If $s\in \teS$ and $\run \in \acc(s)$ then there exists a state $s'$ such that $(s,op, \{s'\})\in \delta$ and $\run'\in \acc(s')$, where $\run=\sigma''.op.\run'$ and $\weakestpre{\subst{\sigma''}{assume}{assert}}{\AssnMap(s)}=\AssnMap(s)$(**). Based on this transition $(s,op,\{s'\})\in \delta$ we have the following sub-cases,
  \begin{itemize}
   \item $(s,op,\{s'\})$ was added by the this transformation virtue of one of the following conditions,
    \begin{itemize}
    \item $\HMap(s)$ and $\HMap(s')$ are unsatisfiable and $\weakestpre{\subst{op}{assume}{assert}}{\AssnMap(s)}\limp \AssnMap(s')$ (Rule \ref{tag:unsatrule}); By IH on $\sigma'$ we have $\weakestpre{\subst{\rev(\run')}{assume}{assert}}{\AssnMap(s')}$ is logically equivalent to $\HMap(s')$. Using Property \ref{prop:wpsubset} (conjunction part) and the assumption $\weakestpre{\subst{op}{assume}{assert}}{\AssnMap(s)}\limp \AssnMap(s')$ we get $\weakestpre{\subst{\rev(\run')}{assume}{assert}}{\weakestpre{op}{\AssnMap(s)}}$ is unsatisfiable and same as $\HMap(s)$. Using (**), $\weakestpre{\subst{\rev(\run')}{assume}{assert}}{\weakestpre{op.\rev(\run'')}{\AssnMap(s)}}$ is unsatisfiable and same as $\HMap(s)$. By replacing $\sigma=\sigma''.op.\sigma'$ we get the required proof.
    \item $\HMap(s)$ and $\HMap(s')$ are valid and $\AssnMap(s')\limp \weakestpre{\subst{op}{assume}{assert}}{\AssnMap(s)}$ (Rule \ref{tag:validrule}); By IH on $\sigma'$ we have $\weakestpre{\subst{\rev(\run')}{assume}{assert}}{\AssnMap(s')}$ is logically equivalent to $\HMap(s')$. Using property \ref{prop:wpsubset} (disjunction part) and the assumption $\AssnMap(s')\limp \weakestpre{\subst{op}{assume}{assert}}{\AssnMap(s)}$ we get $\weakestpre{\subst{\rev}{assume}{assert}(\run')}{\weakestpre{\subst{op}{assume}{assert}}{\AssnMap(s)}}$ is valid and same as $\HMap(s)$. Using (**) and or replacing $\sigma=\sigma''.op.\sigma'$ we get the required result and hence proved.
    \end{itemize}
   \item If this transition was already in $\delta$; we can use the same reasoning as used in the proof of Lemma \ref{lem:hoare} to show that $\weakestpre{\subst{\rev(\run)}{assume}{assert}}{\AssnMap(s)}$ is logically equivalent to $\HMap(s)$
  \end{itemize}
   \item If $s\in \faS$ then similar argument goes as in the proof of Lemma \ref{lem:hoare} because no new transition gets added from these states as a result of this transformation.
  \end{itemize}

\end{proof}

\subsection{Proof of Theorem \ref{lem:termination}}\label{ref:thmtermination}


\begin{proof}
\begin{itemize}
 \item Let us first prove that this algorithm terminates for finite state programs. For finite state programs the number of possible assertions used in the construction of AFA are finite and hence only a finite number of different AFA are possible. It implies the termination of this algorithm.
 \item Following Lemma \ref{lem:hoare} and the fact that $\AssnMap(s_0)=\neg \phi$, every word $\run'$ accepted by this AFA, equivalently written as $\run'\in \acc(s_0)$, satisfies $\weakestpre{\subst{\rev(\run')}{assume}{assert}}{\neg \phi}=\HMap(s_0)$(*). By Lemma \ref{lem:runinafa} and the fact that $\ResdMap(s_0)=\run$ we get $\rev(\run)\in \acc(s_0)$(**). Combining (**) and (*), we get $\weakestpre{\subst{\rev(\rev(\run))}{assume}{assert}}{\neg \phi}=\HMap(s_0)$ or equivalently $\weakestpre{\subst{\run}{assume}{assert}}{\neg \phi}=\HMap(s_0)$. 
 \begin{itemize}
  \item If $\mathcal{I} \lland \HMap(s_0)$ is satisfiable (Line \ref{finalcounter}) then $\mathcal{I}\lland  \weakestpre{\subst{\run}{assume}{assert}}{\neg \phi}$ is satisfiable as well. Following Lemma \ref{lem:assumeassrtlhs2} we got a valid error trace which is returned in Line \ref{finalret1}.
  \item If $\mathcal{I} \lland \HMap(s_0)$ is unsatisfiable then by Lemma \ref{lem:assumeassrtrhs2} this trace is provably correct. Now we apply transformations of Section \ref{subsec:trans} on the AFA to increase the set of words accepted by it. The final AFA is then reversed and subtracted from the set of executions seen so far. Lemma \ref{lem:hoare} ensures that for all such words $\run'$ the condition $\mathcal{I}\nRightarrow \weakestpre{\run'}{\neg \phi}$ holds and therefore none of them violate $\phi$ starting from the initial state. Therefore in every iteration only correct set of executions are being removed from the set of all executions. Therefore when this loop terminates then all the executions have been proved as correct. 
  \end{itemize}
\end{itemize}
\end{proof}

\end{document}